\documentclass[11pt ]{article}
\usepackage{color}
\usepackage{amssymb}

\usepackage{enumerate}
\usepackage{amsmath}
\usepackage{natbib}
\usepackage{graphicx}

\usepackage{subfigure}
\usepackage[margin=1.1in]{geometry}

\usepackage{bbm}
\usepackage{adjustbox}
\usepackage{multirow}


\setcounter{MaxMatrixCols}{10}
\newtheorem{theorem}{Theorem}

\newtheorem{example}[theorem]{Example}

\newtheorem{proposition}[theorem]{Proposition}
\newtheorem{remark}[theorem]{Remark}

\newenvironment{proof}[1][Proof]{\textbf{#1.} }{\ \rule{0.5em}{0.5em}}

\newenvironment{mat}{\left[\begin{array}{ccccccccccccccc}}{\end{array}\right]}
\newcommand\bcm{\begin{mat}}
\newcommand\ecm{\end{mat}}

\newenvironment{rmat}{\left[\begin{array}{rrrrrrrrrrrrr}}{\end{array}\right]}
\newcommand\brm{\begin{rmat}}
\newcommand\erm{\end{rmat}}

\newcommand{\EE}{{\mathord{I\kern -.33em E}}}
\def\E{{\EE}} 
\def\Q{{\mathbb Q}} 
\def\R{{\mathbb R}} 
\def\1{{\mathbf 1}} 
\def\F{{\mathcal F}} 


\begin{document}
\title{A Top-Down Approach for the Multiple  Exercises and Valuation of   Employee Stock Options}
\author{Tim Leung\thanks{Department of Applied Mathematics, University of Washington, Seattle WA 98195. E-mail:
\mbox{timleung@uw.edu}. Corresponding author.}    \and Yang Zhou\thanks{Department of Applied Mathematics, University of Washington, Seattle WA 98195. E-mail:
\mbox{yzhou7@uw.edu}.}}

\date{\today} \maketitle
\begin{abstract}We propose a new framework to value employee stock options (ESOs) that captures multiple exercises of different quantities over time. We also model the ESO holder's job termination risk and incorporate its impact on the payoffs of both vested and unvested ESOs. Numerical methods based on Fourier transform and finite differences are developed and implemented to solve the associated  systems of PDEs. In addition, we introduce a new valuation method  based on maturity randomization  that yields  analytic formulae for vested and unvested ESO costs. We examine the cost impact of job termination risk, exercise intensity, and various contractual features.
\end{abstract}

\newpage
\section{Introduction}

The use of employee stock options (ESOs) as part of compensation is a common practice among large and small companies in the   United States. Financial Accounting Standards Board (FASB) requires companies to value  these stock options and report the total granting cost.\footnote{See FASB Accounting Standards Codification (ASC) no.718 (formerly, FASB Statement 123R), \emph{Accounting for Stock-Based Compensation}.} This requirement raises the need for valuation methods that can effectively capture the payoff structure  and exercise pattern of these  stock options.

Empirical studies suggest that  ESO holders tend to start exercising their options exercise early, often soon after the vesting period, and gradually exercise the remaining options over multiple dates before maturity. \cite{HuddartLang1996}, \cite{Marquardt2002}, and \cite{Bettis2005} point out that, for ESOs with 10 years to maturity, the expected time to exercise is 4 to 5 years. Investigating  how ESO exercises are spread out over time, \cite{HuddartLang1996} show that the mean fraction of options exercised by a typical employee at one time varied from 0.18 to 0.72.   For more empirial studies,  we refer to  \cite{HuddartLang1996}, \cite{BettisBizjakLemmon2001}, \cite{Marquardt2002}, \cite{Armstrong07}, \cite{Kevin08},  \cite{RandallErik} and \cite{Carpenter11}. These empirical findings motivate us to consider a valuation model that account for multiple exercises of various units of options at different times.  As noted by \cite{JainSubramanian2004}, ``the incorporation of multiple-date exercise has important economic and account consequences."



In this paper, we take the firm's perspective to determine the cost of an ESO grant. An ESO grant commonly  involves multiple options with a long maturity. There is also a vesting period, during which option exercise is prohibited and job termination leads to forfeiture of the options. The key component of our proposed valuation framework is an exogenous jump process that models the random exercises over time. Within our framework, the employee's exercise intensity can be   constant or stochastic, and the number of options exercised at each time can be specified to be deterministic or random. In essence, this top-down approach offers a flexible setup to model any exercise pattern.  The idea is akin to the top-down approach in credit risk (\cite{GieseckeGoldberg}), where the exogenous jump process represents portfolio losses.   Since the  ESO payoff depends heavily on when the employee leaves the firm, we  also include a random job termination time and allow the job termination rate to be different during and after vesting period. 

The valuation problem leads to the study of the system of partial differential equations (PDEs) associated with the vested and unvested options. In order to compute the ESO costs, we present two numerical methods to solve the PDEs. We   discuss the method of fast Fourier transform (FFT),  followed by the finite difference method (FDM).  By applying Fourier transform, we simplify  the original second-order PDEs to ODEs in the constant intensity case and first-order PDEs in the stochastic intensity case. The ESO costs are recovered via inverse  fast Fourier   transform. The results from the two methods are illustrated and compared under both deterministic and stochastic exercise intensities. Furthermore,  we introduce a new valuation method  based on maturity randomization. The key advantage of this method is that it yields  analytic formulae, allowing  for instant computation.  

Using all three numerical methods, we compute the costs and examine the impact of job termination risk, exercise intensity, vesting period, and other features. Among our findings, we illustrate the distributions of exercise times under different model specifications, and also  show that the average time of exercises tends to increase nonlinearly with the number of ESOs granted, resulting in a higher per-unit cost.  In other words, under the assumption that the ESOs will be exercised gradually, a larger ESO grant has an indirect effect of delaying exercises, and thus leading to higher ESO costs.



 In the literature, there are three main approaches for risk-neutral valuation and expensing of ESOs. Models are often be differentiated by their assumptions on exercise timing. One approach is to pre-specify an exercise boundary that determines the employee's exercise strategy  of a single ESO. In turn, the ESO is then priced as an option of barrier type (\cite{HullWhite2004,Cvitanic2008}). The boundary is typically chosen to be explicit and simple for the ease of computation but does not come with empirical or behavioral justification.  
 
 Another approach is to an optimal exercise time that maximizes the expected discounted payoff under some risk-neutral pricing measure (\cite{LeungWan}). Instead of the risk neutrality assumption, a number of related studies incorporate the employee's risk preferences and hedging constraints and derive the optimal exercise strategy by solving a utility maximization problem. For this line of research, we refer to \cite{JainSubramanian2004,GrasselliHenderson2009,Leung_MF09,Leung_SIAM09}, and \cite{Carmona11}.  In particular, \cite{JainSubramanian2004} and \cite{Leung_MF09}, respectively, propose discrete-time and continuous-time models that allow the risk-averse employee to strategically exercise the ESOs over time rather than all on the same date. In addition to accounting for multiple-date exercises, \cite{GrasselliHenderson2009} also show that a risk-averse employee may find it optimal to exercise multiple options simultaneously at different exercise times.
  

In reality, firms do not know when ESOs will be exercised. Therefore, it is reasonable to model ESO exercises as some exogenous events so that the firm is not assumed to have access to the employee's risk preferences and exercise strategy. This leads to the approach, as studied by \cite{JennergrenNaslund1993,CarrLinetsky2000} among others, that models ESO exercise by  the first arrival time of an exogenous jump process. Although the exercises are exogenous events, the frequency and timing of their exercises can be dependent on the firm's stock and other contractural features. Our proposed approach is essentially an extension of this approach to modeling multiple ESO exercises over the life of the options.



The rest of the paper is organized as follows. In Section \ref{sect-ESOmodel}, we present our ESO valuation model. 
The numerical method is discussed in Section \ref{sect-num}. In Section \ref{sect-stochastic}, we discuss the case with stochastic exercise intensity. Then in Section \ref{sect-mat-rand} we introduce a novel valuation method based on maturity randomization. Finally, concluding remarks are provided in Section \ref{sect-conclude}.

\section{ESO Valuation Model}\label{sect-ESOmodel}
We begin by describing the ESO payoff structure, and then introduce the stochastic model that captures  various sources of randomness. The   valuation of both vested and unvested ESOs is presented.

\newpage
\subsection{Payoff Structure}
 The ESO is  an early exercisable  call option written on the company stock with a long maturity $T$ ranging from 5 to 10 years.  In order to maintain the incentive effect of ESOs, the company typically prohibits the ESO holder (employee) from exercising during a  {vesting period} from the grant date. During the vesting period, which ranges from 1 to 5 years, the holder's departure from the company, voluntarily or forced, will lead to forfeiture of the option, rending it worthless.  We denote [$0,t_v$) as the vesting period, and after  the date $t_v$ the ESO is \emph{vested} and free to be exercised until it expires at time $T$.   The ESO payoff at any time $\tau$ is $(S_{\tau} -K)^+1_{\{t_v\le\tau\le T\}}$, where $S_{\tau}$ is the firm's stock price at time $\tau$ and $K$ is the strike price. Upon departure, the employee is supposed to exercise all the remaining options.   Figure \ref{ESOpayoff} shows all four payoff scenarios associated with an ESO. \\

\begin{figure}[ht]
  \centering
  \includegraphics[width=4.5in]{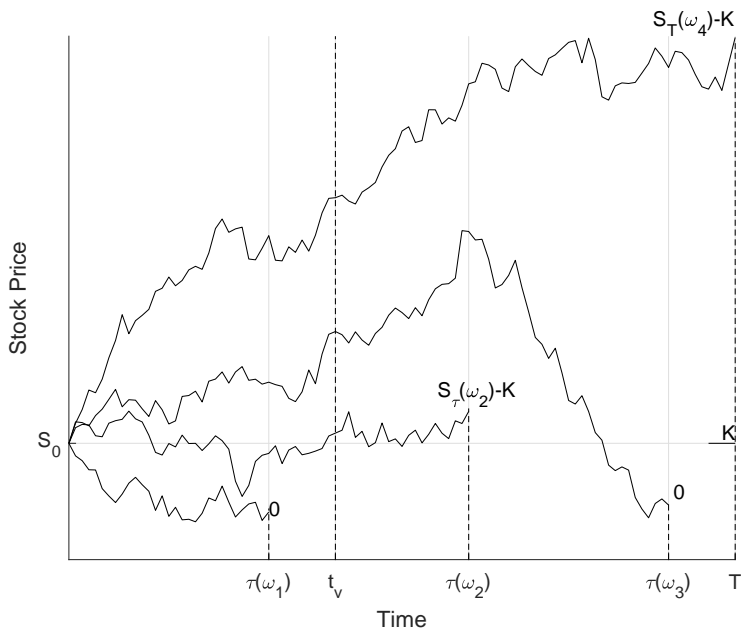}\\
  \caption{ESO payoff structure. From bottom path to top path: (i) The employee leaves the firm during the vesting period, resulting in forfeiture of the ESO and a zero payoff. (ii) The employee exercises the vested ESO before maturity due to desire to liquidate or job termination and receive the payoff $(S_\tau(\omega_2)-K)^+$. (iii) The employee exercises the vested ESO before maturity due to job termination, but receives nothing. (iv) The employee exercises the option at  maturity $T$. }\label{ESOpayoff}
\end{figure}
 
 \vspace{10pt}

\subsection{Job Termination and Exercise Process}
The employee's job termination plays a crucial role in the exercise timing and resulting payoff of the ESOs. 
We model the job termination time during the vesting period by an exponential random variable $\zeta\sim\exp(\alpha)$, with  $\alpha\ge0$. When the ESO becomes vested after  $t_v$, we   model the employee's job termination time  by another   exponential random variable $\xi\sim\exp(\beta)$, with $\beta\ge0$. We assume that $\zeta$ and $\xi$ are mutually independent. This approach of modeling job termination by an exogenous random variable  is also used by \cite{JennergrenNaslund1993}, \cite{Carpenter1998}, \cite{CarrLinetsky2000}, \cite{HullWhite2004}, \cite{SircarXiong2007}, \cite{Leung_SIAM09}, \cite{Carmona11}, and \cite{LeungWan}, among others. In our model, using two different exponential times allows us to account for the varying level of job termination risk during and after the vesting period. 

An ESO grant typically contains multiple options. Empirical studies show that employee tends to exercise the options gradually over time, rather than exercising all options at once.  This motivates us to model the sequential random timing of exercises. In our proposed model, we consider a grant of $M$ units of identical early exercisable ESOs with the same strike price $K$ and expiration date $T$.  These $M$ ESOs are exercisable only after the vesting period $[0,t_v)$. For the vested ESOs, we define the random \emph{exercise process} $L_t$, for ${t_v\leq t\leq T}$, to be the positive  jump process representing  the number of ESOs exercised over time. As such,  $L_t$ is an integer process that takes value on $[0,M]$. The corresponding jump times are denoted by the sequence $(\tau_1, \tau_2, \ldots)$, and the frequency of exercises is governed by the jump intensity process $(\lambda_t)_{t_v\leq t\leq T}$.  

The jump size for the $i$th jump of $L$ represents the number of ESOs exercised and is described by  a discrete random variable  $\delta_i$. The exercise process starts at time $t_v$ with  $L_{t_v}=0$. By definition, we have $L_T\le M$. This means that the random  jump size at any time $t$  must take value within $[1,M-L_{t-}]$. Also, as soon as $L_t$ reaches the upper bound $M$, the jump intensity  $\lambda_t$ must be set to be zero thereafter. Given that the employee still holds $m$   options,  the probability mass function of the random jump size is  \begin{equation}
p_{m,z}\triangleq\mathbb{P}\{\delta_i=z|L_{\tau_i-}=M-m\}\,.
\end{equation} 
In turn, the expected number of options to be  exercised at each exercise time is given by 
\begin{align}
\bar{p}_m\triangleq\sum_{z=1}^m z p_{m,z}, \label{barpm}
\end{align} 
which again depends on the current number of ESOs held.

 The employee may exercise single or multiple units of ESOs over time.  On the date of expiration or job termination,  any unexercised options must be exercised.  Hence, the discounted  payoff from the ESOs over $[0,T]$ is a sum of two terms, given by
\begin{equation}\label{payoff1}
\bigg(\int_{t_v}^{T\wedge\xi} e^{-r t} (S_t -K)^+ dL_t + e^{-r(T\wedge\xi)} (M- L_{T\wedge\xi}) (S_{T\wedge\xi}-K)^+\bigg)1_{\{\zeta\ge t_v\}}.
\end{equation} 
The indicator $1_{\{\zeta\ge t_v\}}$ means that the ESO payoff is zero if the employee leaves the firm during the vesting period.

 \vspace{10pt}
 

\begin{example}[Unit Exercises]\label{PoiExer}Suppose $L_t$ be a nonhomogeneous Poisson process $(N_t)_{0\le t\le T}$ with a time-varying  jump intensity function $\lambda(t)$, for $0\le t\le T$. At each jump time a single option is exercised. In Figure \ref{PoissonExercise} we illustrate three possible scenarios. In  scenario (i), the employee   exercises 6 out of 10 options one by one, but must  exercise 4 remaining options upon job termination realized at time $\xi(\omega_1)$. In scenario (ii) the employee  exercises  all 10 options one by one before   maturity. In scenario (iii), the employee has not exercised all the options by maturity, so all remaining options are exercised at time $T$.  \end{example}
\clearpage

\begin{figure}[h]
\begin{centering}
\includegraphics[width=4in]{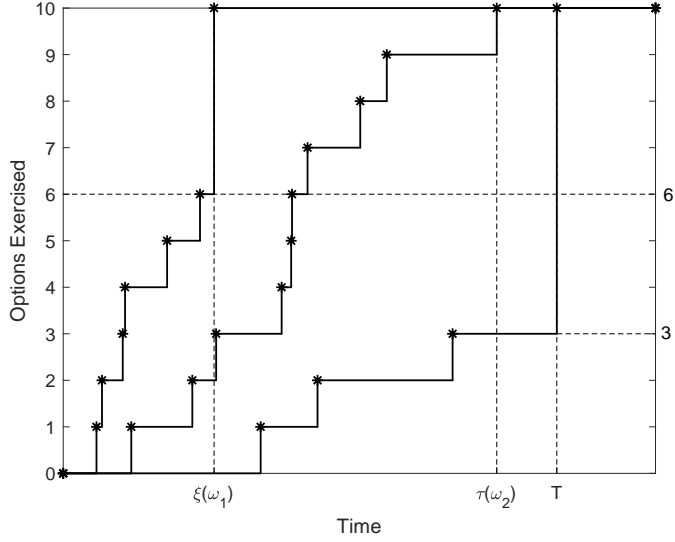}
\caption{Three illustrative sample paths of the process for Poisson exercises of 10 ESOs.  From path top to bottom path: (i) The employee first exercises 6 out of 10 options one by one, but is then forced to exercise 4 remaining options upon job termination realized at time $\xi(\omega_1)$. (ii) The employee exercises all the options one by one before expiration and job termination. The last option is exercised at $\tau(\omega_2)$ shown in the plot. (iii) The employee exercises 3 options one by one before maturity and 7 remaining options at maturity.}\label{PoissonExercise}
\end{centering}
\end{figure}

\begin{example}[Block Exercises]\label{MultiExer} Suppose the employee can exercise one or more options at each exercise time. As an example, We assume a uniform distribution for the number of options to be exercised, so we set $p_{m,z}={m}^{-1}$ for $z=1, \ldots, m$. In Figure \ref{AverageExercise},  we illustrate the distributions of  the weighted average exercise time $\bar\tau$ defined by  \begin{equation}\label{AverageExerciseTime}
\bar\tau=\frac{\sum_{i=1}^N \delta_{i}*\tau_i}{M},
\end{equation}where $\delta_{i}$ is  the   number of ESOs exercised at the $i$th exercise time $\tau_i$, and $N$ is the number of distinct exercise times  before or at time $T$.  For each simulated path, we take an average of the distinct exercise times    weighted by the number of options exercised at each time.   With common parameters $M=20, t_v=0, T=10$, the histograms of  $\bar\tau$ correspond to different values of $\lambda$ and $\beta$. With  a low job termination rate $\beta$ and low exercise intensity $\lambda$ (panel (a) where $\beta=0$, $\lambda=0.3$), more options tend to be exercised at maturity. Comparing panel (b)  to  panel  (c), and also  panel (b) to  panel (d), we see that a higher job termination rate or higher exercise intensity lowers the average exercise time and reduces instances of exercising at maturity.   Similar patterns can also  be found in  empirical studies \citep[Fig. 3]{RandallErik}.

\end{example}

\begin{figure}[h]
\centering
\subfigure[ ]
{\includegraphics[trim={1cm 0cm 1cm 0.1cm},clip,width=2.8in, angle=0]{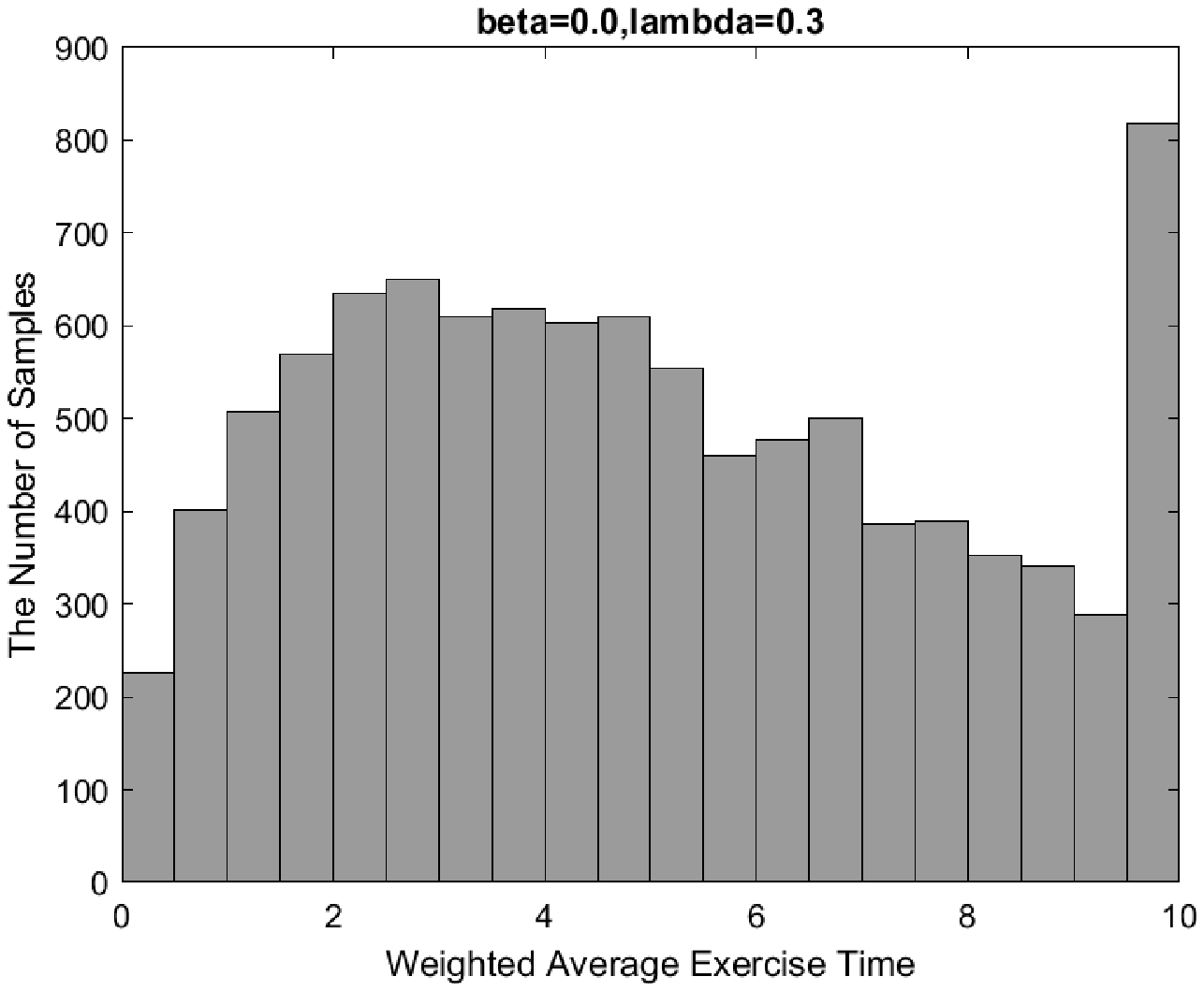}
}
\subfigure[ ]
{\includegraphics[trim={1cm 0cm 1cm 0.1cm},clip,width=2.8in, angle=0]{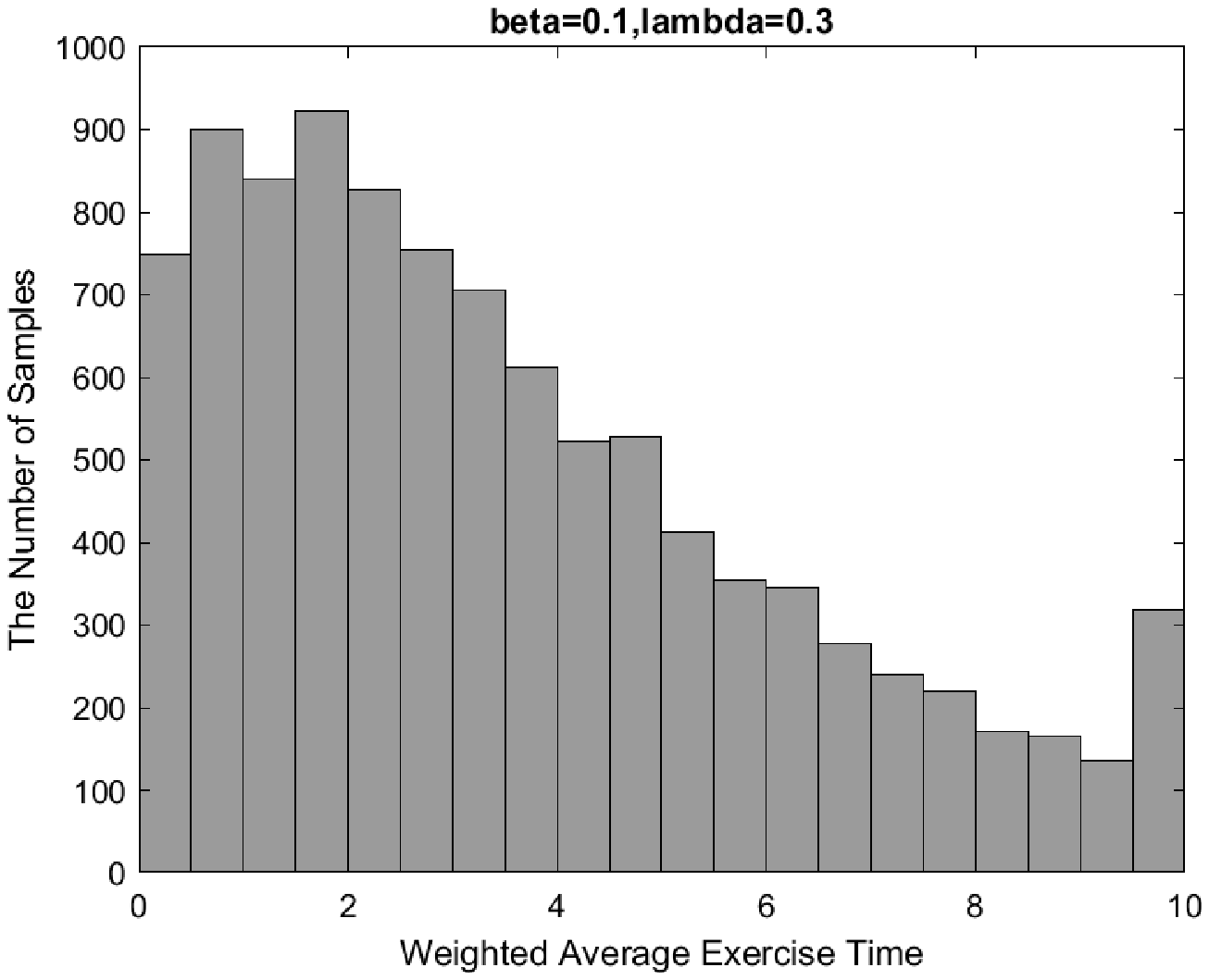}
}
\subfigure[ ]
{\includegraphics[trim={1cm 0cm 1cm 0.1cm},clip,width=2.8in, angle=0]{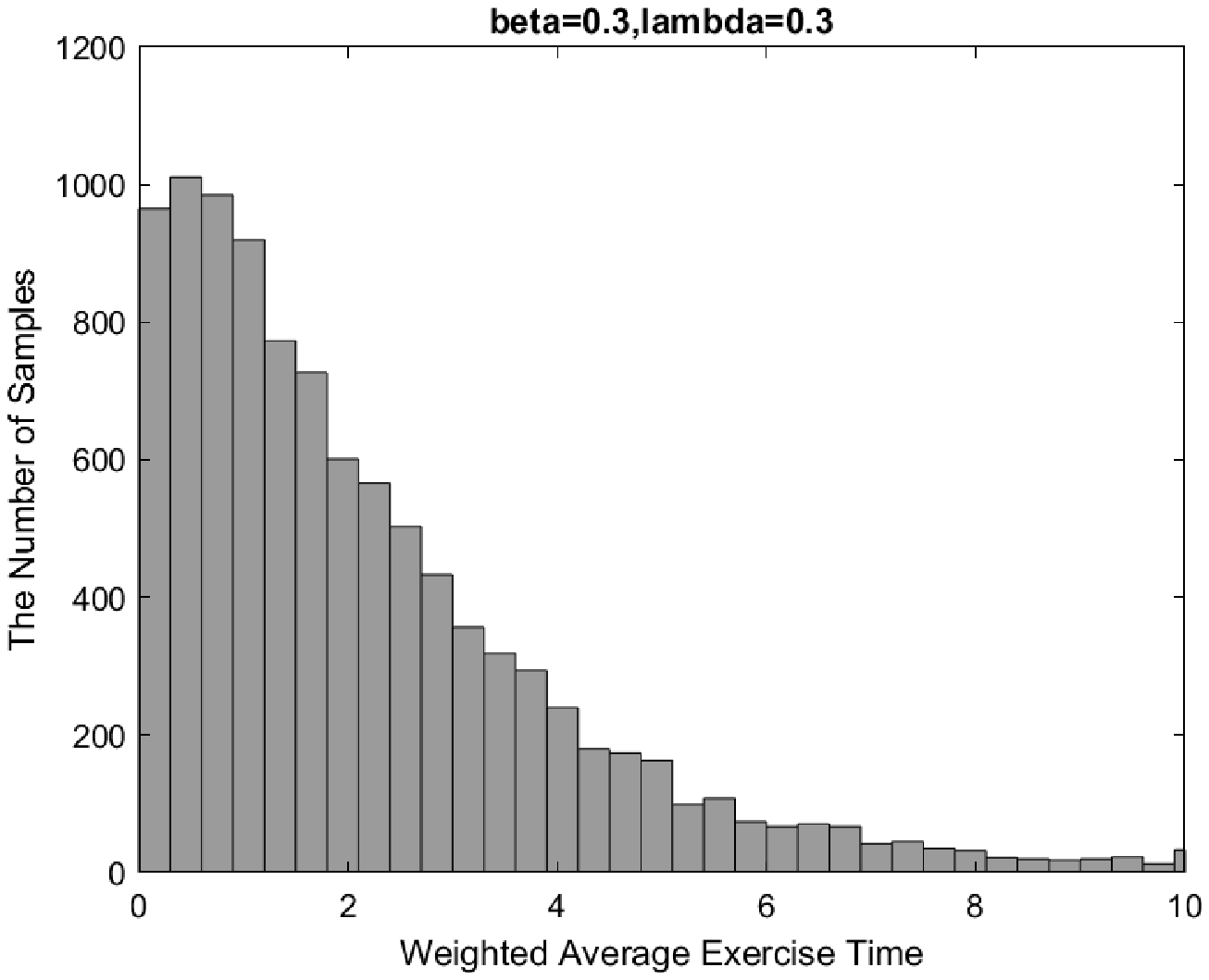}
}
\subfigure[ ]
{\includegraphics[trim={1cm 0cm 1cm 0.1cm},clip,width=2.8in, angle=0]{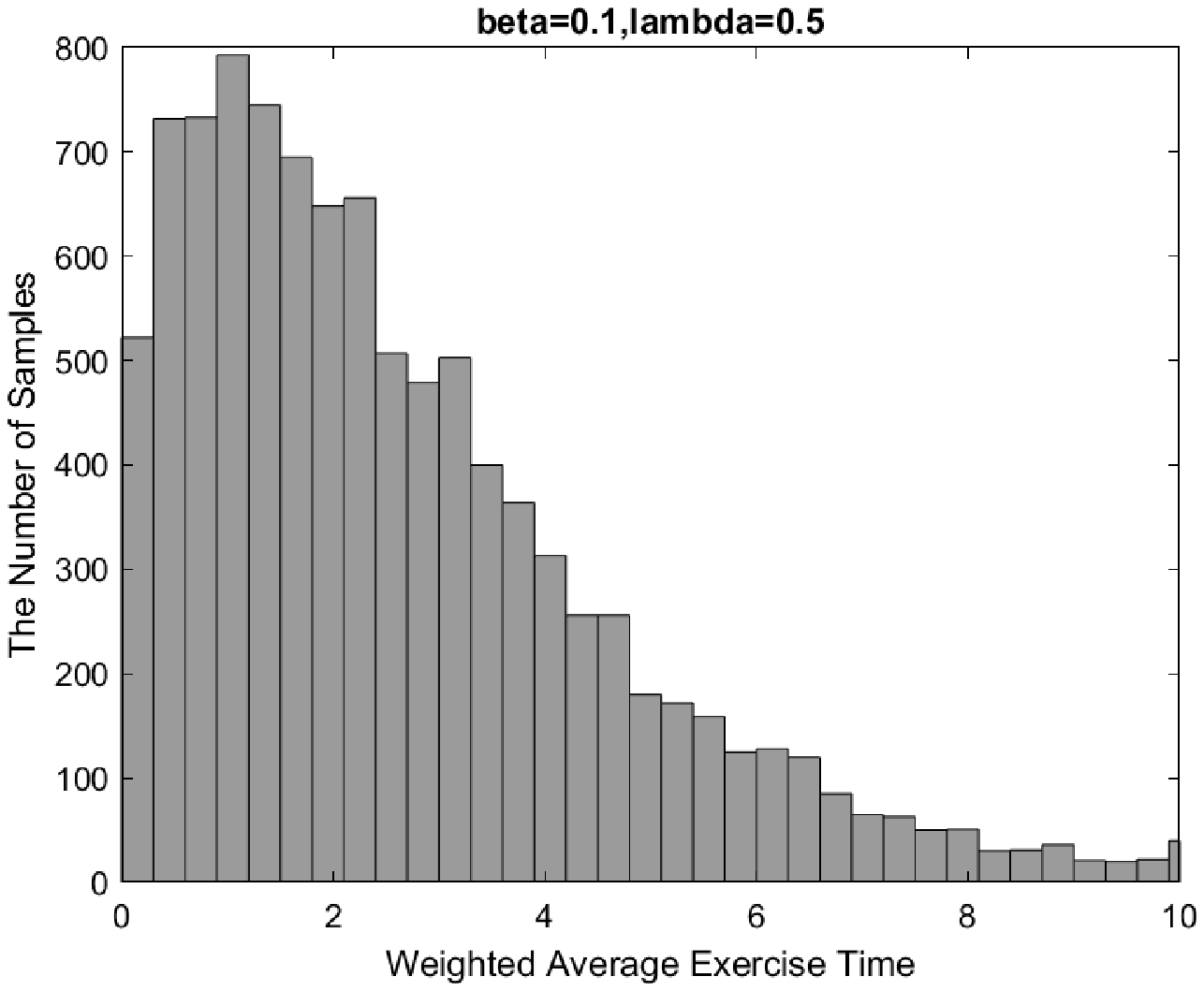}
}
\caption{Histograms of weighted average exercise times, as defined in \eqref{AverageExerciseTime}, based on 10,000 simulated exercise  processes for 20 vested ESOs with  a 10-year maturity.   Panels have different rates of job termination  $\beta$ and exercise intensity $\lambda$. (a): $\beta=0$, $\lambda=0.3$; (b): $\beta=0.1$, $\lambda=0.3$; (c): $\beta=0.3$, $\lambda=0.3$; (d): $\beta=0.1$, $\lambda=0.5$. }\label{AverageExercise}
\end{figure}


\clearpage
\subsection{PDEs for ESO Valuation}\label{sect-ESOPDE}
To value ESOs, we consider a risk-neutral pricing measure $\Q$ for all stochastic processes and random variables in our model. We model the firm's stock price process  $(S_t)_{t\ge 0}$ by a geometric Brownian motion
\begin{equation}\label{S}
dS_t = (r-q) S_t \,dt + \sigma S_t \, dW_t,
\end{equation}
where the positive constants  $r$, $q$ and $\sigma$ are the interest rate, dividend rate,  and volatility parameter  respectively, and $W$ is a standard Brownian motion under $\Q$, independent of the exponentially-distributed job termination times $\zeta$ and $\xi$.  Our default assumption for the employee's exercise intensity is that it is a deterministic  function of time, denoted by $\lambda(t)$.  We will discuss the case with a stochastic exercise intensity in Section \ref{sect-stochastic}. 

At any time $t\in [t_v, T]$, the ESO is vested. The   vested ESO cost functions $C^{(m)}(t,s)$, for $m=1, 2, \ldots, M$, where $m$ is the number of options currently held, are given by the risk-neutral expectation of discounted future ESO payoffs provided that the employee has not left the firm.  
\begin{align}
C^{(m)}(t,s) =& \E\bigg\{ \int_{t}^{T\wedge \xi} e^{-r (u-t)} (S_u -K)^+ dL_u \notag \\
&~+ e^{-r(T\wedge \xi-t)} (M- L_{T\wedge \xi}) (S_{T\wedge \xi}-K)^+\,|\, S_t = s, L_t = M-m \bigg\}\notag\\
=&\E\bigg\{ \int_{t}^{T} e^{-(r+\beta) (u-t)} (S_u -K)^+ dL_u+ e^{-(r+\beta)(T-t)} (M- L_T) (S_T-K)^+ \notag \\
&~+\int_t^T\beta e^{-(r+\beta)(v-t)}(M-L_v)(S_v-K)^+dv\,|\, S_t = s, L_t = M-m \bigg\},\label{Cmts}
\end{align} for $m=1, 2, \ldots, M$, and  $(t,s)\in[t_v, T]\!\times\!\R_+$.  

Next, we  define the infinitesimal generator associated with the stock price process $S$ by 
\begin{equation}
\mathcal{L}\,\cdot=(r-q)s\partial_s\cdot\,+\frac{\sigma^2s^2}{2}\partial_{ss}\,\cdot\,.\label{L}
\end{equation}
We determine the vested ESO costs by solving the following system of PDEs. 
\begin{align}\label{MultiplePDE}
-(r+\lambda(t)+\beta)C^{(m)}+C^{(m)}_t+\mathcal{L}C^{(m)}+ \lambda(t)\sum_{z=1}^{m-1} p_{m,z}C^{(m-z)} +  \left(\lambda(t)\bar p_m+m\beta\right) (s-K)^+=0,
\end{align} for $(t,s)\in[t_v, T]\!\times\!\R_+$ and $m=1, 2, \ldots, M$. Here, $\bar p_m$ is the expected number of options exercised and $p_{m,z}$ is the probability of exercising $z$ options with $m$ options left. The terminal condition is $C^{(m)}(T,s) =  m(s-K)^+$  for $s\in\R_+$.

 During the vesting period $[0,t_v)$, the ESO is unvested and is subject to forfeiture if the employee leaves the firm. We denote the cost of $m$  units of unvested ESO by $\tilde C^{(m)}(t,s)$. Since holding an unvested ESO effectively entitles the holder to obtain a vested ESO  at time $t_v$ provided the holder is still with the firm. If the ESO holder leaves the firm at any time $t\in[0,t_v)$, the unvested ESO cost is zero.  Otherwise, given that $\zeta >t$, the (pre-departure) unvested ESO cost  is 
  \begin{align}
\tilde{C}^{(m)}(t,s)& = \E\left\{e^{-r(t_v-t)} C^{(m)}(t_v,S_{t_v})\textbf{1}_{\{\zeta\ge t_v\}}| S_t=s\right\}\notag\\
&=\E\left\{e^{-(r+\alpha)(t_v-t)} C^{(m)}(t_v,S_{t_v})|S_t=s\right\}.\label{unvestedESOcost}
\end{align}
To  determine the unvested ESO cost, we  solve the PDE problem
\begin{equation}\label{unvestedESO}
\begin{aligned}
-(r+\alpha)\tilde{C}^{(m)}+\tilde{C}^{(m)}_t+\mathcal{L}\tilde{C}^{(m)}=0,&&\mbox{ for }(t,s)\in[0,t_v)\times\R_+,\\
\tilde{C}^{(m)}(t_v,s)=C^{(m)}(t_v,s),&& \mbox{ for }s\in\R^+.
\end{aligned}
\end{equation}
Here, $C^{(m)}(t_v,s)$ is the  vested ESO cost evaluated at time $t_v$.

\section{Numerical Methods and Implementation}\label{sect-num}
In this section, we present two numerical methods to solve PDE (\ref{MultiplePDE}). We first discuss the application of fast Fourier transform (FFT) to ESO valuation, followed by the finite difference method (FDM). The results from the two methods are compared in Section \ref{sect-num3}.

\subsection{Fast Fourier Transform}\label{FFT}


We first consider the vested ESO ($t\in [t_v,T]$). Let $x$ such that $s=Ke^x$, and define the function
\begin{equation}
f^{(m)}(t,x)=C^{(m)}(t,Ke^x), \quad  (t,x)\in[t_v,T]\times\R,
\end{equation}
for each $m=1,\ldots,M$. The PDE for $f^{(m)}(t,x)$ is given by
\begin{equation}
-(r+\lambda(t)+\beta)f^{(m)}+f^{(m)}_t+\widetilde{\mathcal{L}} f^{(m)}+ \lambda(t)\sum_{z=1}^{m-1} p_{m,z}f^{(m-z)} +  \left(\lambda(t)\bar p_m+m\beta\right) (Ke^x-K)^+=0,\label{FourierPDE}
\end{equation}
where 
\begin{equation}
\widetilde{\mathcal{L}} \,\cdot = (r-q-\frac{\sigma^2}{2})\partial_x\cdot+\frac{\sigma^2}{2}\partial_{xx}\cdot\,. \label{Lf}
\end{equation}
The  terminal condition is $f^{(m)}(T,x) =  m(Ke^x-K)^+$, for $x\in\R$.

The Fourier transform of $f^{(m)}(t,x)$ is defined by
\begin{equation}
\mathcal{F}[f^{(m)}](t,\omega)=\int_{-\infty}^{\infty}f^{(m)}(t,x)e^{-i\omega x}dx,
\label{FourierTransform}
\end{equation}
for $m=1,\ldots,M$, with angular frequency $\omega$ in radians per second.  Applying Fourier transform  to PDE (\ref{FourierPDE}), we  obtain an ODE for $\mathcal{F}[f^{(m)}](t,\omega)$, a function of time $t$ parametrized by $\omega$, for each $m=1,\ldots,M$. Precisely, we have  
\begin{equation}
\frac{d}{dt}\mathcal{F}[f^{(m)}](t,\omega)=h(t,\omega)\mathcal{F}[f^{(m)}](t,\omega)+\psi^{(m)}(t,\omega),
\end{equation}
where
\begin{align}
h(t,\omega)&=r+\lambda(t)+\beta-i\omega (r-q-\frac{\sigma^2}{2})+\omega ^2\frac{\sigma^2}{2},\\
\psi^{(m)}(t,\omega)&=-\lambda(t)\sum_{z=1}^{m-1} p_{m,z}\mathcal{F}[f^{(m-z)}](t,\omega)-\left(\lambda(t)\bar p_m+m\beta \right)\varphi(\omega) ,\\
\varphi(\omega) &= \mathcal{F}[(Ke^x-K)^+](\omega),\label{varphi}
\end{align} with the terminal condition $\mathcal{F}[f^{(m)}](T,\omega)=m\varphi(\omega)$. Solving the ODE, we obtain
\begin{equation}
\mathcal{F}[f^{(m)}](t,\omega)=e^{-\int_t^T h(s,\omega)ds}\mathcal{F}[f^{(m)}](T,\omega)-\int_t^Te^{-\int_t^uh(s,\omega)ds}\psi^{(m)}(u,\omega)du.\label{FFTSolution}
\end{equation}
Accordingly, we can recover  the vested ESO cost function   by inverse Fourier transform:
\begin{equation}
f^{(m)}(t,x)=\mathcal{F}^{-1}[\mathcal{F}[f^{(m)}]](t,x).\end{equation}
for every  $m=1,\ldots,M$, and $(t,x)\in(t_v,T)\times\R.$

In the literature,   \cite{LeungWan} apply a Fourier time-stepping (FST) method it to compute the cost of an American-style ESO when the company stock is driven by a Levy process. This FST method has been  applied more broadly by \cite{JacksonJaimungalSurkov2008} to solve partial-integro differential equations (PIDEs) that arise in options pricing problems.  
 
\begin{remark}
If $\lambda$ is a constant, then   the Fourier transform in  (\ref{FFTSolution}) can be simplified as
\begin{equation}
\mathcal{F}[f^{(m)}](t,\omega)=\sum_{k=0}^{m-1}F^{(m)}_k(\omega)(T-t)^ke^{-(T-t)h(\omega)}+F^{(m)}(\omega),
\end{equation}
where
\begin{align}
F^{(m)}(\omega)&=\frac{1}{h(\omega)}\bigg(\lambda\sum_{z=1}^{m-1}p_{m,z}F^{(m-z)}(\omega)+(\lambda\bar p_m+m\beta)\varphi(\omega)\bigg),  \label{ceof1}\\
F^{(m)}_k(\omega)&=\frac{\lambda}{k}\sum_{z=1}^{m-k}p_{m,z}F^{(m-z)}_{k-1}(\omega),\quad k=1,2,
\ldots,m-1,  \label{ceof2}\\
F^{(m)}_0(\omega)&=\mathcal{F}[f^{(m)}](T,\omega)-\frac{1}{h(\omega)}\bigg(\lambda\sum_{z=1}^{m-1}p_{m,z}F^{(m-z)}(\omega)+(\lambda\bar p_m+m\beta)\varphi(\omega)\bigg), \label{ceof3}\\
h(\omega)&=r+\lambda +\beta-i\omega (r-q-\frac{\sigma^2}{2})+\omega ^2\frac{\sigma^2}{2}. \label{ceof4}
\end{align}
In \eqref{ceof1} and \eqref{ceof3}, $\varphi(\omega)$ is defined in (\ref{varphi}).
\end{remark}

For numerical implementation, we work with a finite domain $[t_v,T]\times[x_{min},x_{max}]$ with uniform discretization of lengths $\delta t=(T-t_v)/N_t$ and $\delta x=(x_{max}-x_{min})/(N_x-1)$ in the time-space dimensions.  We set $\delta t=0.01$, $x_{min}=-10$, $x_{max}=10$ and $N_x=2^{12}$. Similarly, we discrete the finite frequency space $[\omega_{min},\omega_{max}]$ with uniform grid size of 
$\delta\omega$,  where we apply the Nyquist critical frequency that $\omega_{max}=\pi/\delta x$ and $\delta\omega=2\omega_{max}/N_x$.
For $j= 0,\ldots,N_t$, and $k=0,\ldots,N_x-1$, we denote $t_j=t_v+j\delta t$, $x_k=x_{min}+k\delta x$, and 
\begin{equation}
\omega_k=\left\{
\begin{aligned}
&k\delta\omega,&0\le k \le N_x/2\,,\\
&k\delta\omega-2\omega_{max},&N_x/2+1\le k \le N_x-1]\,.
\end{aligned}
\right.
\end{equation}
Then we numerically compute the discrete Fourier transform 
\begin{align}
\mathcal{F}[f](t_j,\omega_k)&\approx\sum_{n=0}^{N_x-1}f(t_j,x_n)
e^{-i\omega_kx_n}\delta x=\phi_k\sum_{n=0}^{N_x-1}f(t_j,x_n)e^{-i2\pi kn/N_x},\label{dft}
\end{align}
with $\phi_k=e^{-i\omega_k x_{min}}\delta x$. In (\ref{dft}), we evaluate the sum $\sum_{n=0}^{N_x-1}f(t_j,x_n)e^{-i2\pi k n/N_x}$ by applying the standard  fast Fourier transform (FFT) algorithm. The corresponding Fourier inversion is conducted by inverse FFT, yielding the vested ESO cost $f(t_j,x_n)$. Note that the coefficient $\phi_k$ will be cancelled in the process.

As for the unvested ESO, we define the associated cost function
\begin{equation}
\tilde f^{(m)}(t,x)=\tilde C^{(m)}(t,Ke^x),
\end{equation}
for each $m=1,\ldots,M$. From PDE (\ref{unvestedESO}), we derive the PDE for $\tilde f^{(m)}(t,x)$
\begin{equation}\label{FourierPDE_unvested}
-(r+\alpha)\tilde f^{(m)}+\tilde f^{(m)}_t+\widetilde{\mathcal{L}}\tilde f^{(m)}=0,
\end{equation}
for $(t,x)\in[0,t_v)\times\R$, with the terminal condition $\tilde f^{(m)}(t_v,x) =  f^{(m)}(t_v,x)$, for $x\in\R$. As we can see, once the vested ESO cost is computed, it determines the terminal condition for the unvested ESO problem.



Applying Fourier transform   to (\ref{FourierPDE_unvested}), we can derive the ODE for $\mathcal{F}[\tilde{f}^{(m)}](t,\omega)$,
\begin{equation}
\frac{d}{dt}\mathcal{F}[\tilde{f}^{(m)}](t,\omega)=\tilde h(\omega)\mathcal{F}[\tilde{f}^{(m)}](t,\omega),
\end{equation}
where
\begin{equation}
\tilde h(\omega)=r+\alpha-i\omega(r-q-\frac{\sigma^2}{2})+\omega^2\frac{\sigma^2}{2},
\end{equation}
for $(t,\omega)\in[0,t_v)\times\mathbb{R}$, with the terminal condition $\mathcal{F}[\tilde{f}^{(m)}](t_v,\omega)=\mathcal{F}[f^{(m)}](t_v,\omega)$. We solve the ODE to get  
\begin{equation}
\mathcal{F}[\tilde{f}^{(m)}](t,\omega)=e^{-\tilde h(\omega)(t_v-t)}\mathcal{F}[\tilde{f}^{(m)}](t_v,\omega).
\end{equation}
In turn, we apply inverse Fourier transform to recover the unvested ESO cost:
\begin{equation}
\tilde C^{(m)}(t,Ke^x)=\tilde f^{(m)}(t,x)=\mathcal{F}^{-1}[\mathcal{F}[\tilde f^{(m)}]](t,x),
\end{equation}
for $(t,x)\in[0,t_v)\times\R$. Again, we apply FFT to numerically compute the Fourier transform and use inverse FFT to recover the cost function. 

\subsection{Finite Difference Method}\label{FDM}
For comparison, we also compute the ESO costs using a finite difference method. Specifically, we apply  the Crank-Nicolson method   on a uniform grid. Here we provide an outline with focus on the boundary conditions for our application. For more details, we refer to \cite{WilmottHowisonDewynne1995}, among other references. 


  As for grid settings, We restrict the domain $[t_v,T]\times \R_+$ to a finite domain $\mathcal{D}=\{(t,s)|t_v\le t\le T, 0 \le s \le S_*\}$, where $S_*$ must be relatively very large such that if the current stock price $S_t=S_*$, then the stock price will be larger than the strike price $K$ over $[t,T]$ with great probability. 

To determine the boundary condition at $s=S_*$, we introduce a new function 
\begin{equation}
\begin{aligned}
\bar C^{(m)}(t,s)&=\E\bigg\{ \int_{t}^{T} e^{-(r+\beta) (u-t)} (S_u -K) dL_u+ e^{-(r+\beta)(T-t)} (M- L_T) (S_T-K) \notag \\
&\quad+\int_t^T\beta e^{-(r+\beta)(v-t)}(M-L_v)(S_v-K)dv\,|\, S_t = s, L_t = M-m \bigg\}.
\end{aligned}
\end{equation}
for $m=1,\ldots,M$. When $s=S_*$, we see that $C^{(m)}(t,s)\approx\bar C^{(m)}(t,s)$. Thus, we can set the boundary condition at $s=S_*$ to be $C^{(m)}(t,S_*)=\bar C^{(m)}(t,S_*)$. By Feynman-Kac formula,   $\bar C^{(m)}(t,s)$ satisfies the PDE
\begin{equation}
-(r+\lambda(t)+\beta)\bar C^{(m)}+\bar C^{(m)}_t+\mathcal{L}\bar C^{(m)}+ \lambda(t)\sum_{z=1}^{m-1} p_{m,z}\bar C^{(m-z)}
 +  \left(\lambda(t)\bar p_m+m\beta\right) (s-K)=0,
\end{equation}
for $m=1,\ldots,M$, and $(t,s)\in(t_v,T)\times\R_+$, with terminal condition $\bar C^{(m)}(T,s) =  m(s-K)$, for $s\in\R_+$.
Then, $\bar C^{(m)}(t,s)$ has the ansatz solution
\begin{equation}
\bar C^{(m)}(t,s)=A_m(t)s-B_m(t)K,
\end{equation}
where $A_m(t)$ and $B_m(t)$ satisfy the pair of ODEs respectively,
\begin{equation}
\label{eqnAB}
\begin{aligned}
-(q+\lambda(t)+\beta)A_m+A_m'+ \lambda(t)\sum_{z=1}^{m-1} p_{m,z}A_{m-z} +  \left(\lambda(t)\bar p_m+m\beta\right)=0,\\
-(r+\lambda(t)+\beta)B_m+B_m'+ \lambda(t)\sum_{z=1}^{m-1} p_{m,z}B_{m-z} +  \left(\lambda(t)\bar p_m+m\beta\right)=0,\\
\end{aligned}
\end{equation}
for $m=1,\ldots,M$, and $t\in(t_v,T)$, with the terminal condition $B_m(T)=A_m(T) =  m$, for $m=1,\ldots,M$.
We can solve the ODEs (\ref{eqnAB}) analytically, or numerically solve it using the backward Euler method.

Next, we discrete the domain $\mathcal{D}$ with uniform grid size of $\delta t= (T-t_v)/M_0$ and $\delta S=S_*/N_0$.
Then, we apply $C^{(m)}_{i,j}$ to denote discrete approximations of $C^{(m)}(t_i,s_j)$ where $t_i=t_v+i\delta t$ and $s_j=j\delta S$. The Crank-Nicolson method is applied to solve the PDEs satisfied by $C^{(m)}$, for $m=1,\ldots, M$. Working backward in time, we obtain the vested ESO costs at time $t_v$, which become the   terminal condition values for the unvested ESO valuation problem. For the unvested ESO cost, we restrict the domain $[0,t_v]\times \R_+$ to the finite domain $\tilde{\mathcal{D}}=\{(t,s)|0\le t\le t_v, 0 \le s \le S_*\}$, where $S_*$ is relatively very large such that
\begin{align}
\tilde C^{(m)}(t,S_*)&=\E\left\{e^{-(r+\alpha)(t_v-t)} C^{(m)}(t_v,S_{t_v})|S_t=S_*\right\}\\
&\approx\E\left\{e^{-(r+\alpha)(t_v-t)} (A_m(T-t_v)S_{t_v}-B_m(T-t_v)K)|S_t=S_*\right\}\\
&=e^{-(q+\alpha)(t_v-t)}A_m(T-t_v)S_*-e^{-(r+\alpha)(t_v-t)}B_m(T-t_v)K.
\end{align}
We again apply the  Crank-Nicolson method solve the PDEs satisfied by $\tilde{C}^{(m)}(t,s)$,  for $m=1,\ldots, M$.

\subsection{Numerical Examples}\label{sect-num3}
Using both FFT and FDM we compute different ESO costs by varying the vesting period  $t_v$, job termination rate $\alpha$ and $\beta$, as well as  exercise intensity $\lambda$. In Table \ref{ComparisonI}, we present the ESO costs  and compare   the two numerical methods.  It is well known that the call option value is increasing with respect to its maturity. In a similar spirit if the employee tends to exercise the ESO earlier, then a smaller ESO cost is expected. As we can see in Table \ref{ComparisonI},  the ESO cost  decreases as  exercise intensity $\lambda$  increases, or as job termination rate $\alpha$ or $\beta$ increases, holding other things constant.  On the other hand, the effect of vesting period is not monotone.  In a scenario with a high job termination $\alpha$ during the vesting period, the employee is very likely to leave the firm while the options are unvested, leading to a zero payoff. Consequently, the ESO cost is decreasing with respect to $t_v$. This corresponds to the case with $\alpha=1$ in Table \ref{ComparisonI}. However, if $\alpha$ is small, then the employee is unlikely to leave the firm and lose the options during the vesting period. Therefore,   a longer vesting period would effectively make the employee hold  the options for a longer period of time, delaying the exercise. As a result,  the ESO cost is increasing with respect to $t_v$, which is shown in other cases in Table \ref{ComparisonI}.

\begin{table}[h]
  \centering
  \begin{tabular}{cc|cc|cc|cc}
     \hline
     \multicolumn{2}{c}{\multirow{2}{*}{Parameters}} &\multicolumn{2}{|c|}{$t_v=0$}&\multicolumn{2}{|c}{$t_v=2$}&\multicolumn{2}{|c}{$t_v=4$}\\
     \cline{3-8}
      & & FDM & FFT & FDM & FFT & FDM & FFT \\
      \hline
      \hline
      \multirow{2}{*}{$\alpha=0.1,\beta=0$} & $\lambda=1$ & 5.4729 &5.4753 & 7.8399 & 7.8405 & 8.2845& 8.2849\\
      & $\lambda=2$&3.7067 &3.7101 &6.9164&6.9170&7.7054 &7.7058\\
     \hline
      \multirow{2}{*}{$\alpha=0.1,\beta=1$} & $\lambda=1$ &3.2483 &3.2522 & 6.7063 &6.7069 &7.5746 & 7.5750\\
      & $\lambda=2$&2.7024 & 2.7069&6.4655 & 6.4661 &7.4253& 7.4257\\
     \hline
    \hline
      \multirow{2}{*}{$\alpha=0,\beta=0.1$} & $\lambda=1$ &5.0603&5.0629& 9.3022  &9.3031 & 12.1510& 12.1517\\
      & $\lambda=2$&3.5595&3.5630&8.3622 &8.3631 &11.4298 &11.4306\\
     \hline
      \multirow{2}{*}{$\alpha=1,\beta=0.1$} & $\lambda=1$ &5.0603&    5.0629&1.2579 &1.2590 &0.2219 & 0.2226\\
      & $\lambda=2$&3.5595  &  3.5630&1.1310 & 1.1318& 0.2087& 0.2094\\
     \hline
   \end{tabular}
  \caption{Vested and unvested ESO costs  under different exercise  intensities $\lambda$ and different job termination rates $\alpha$ and $\beta$,  computed using FFT and FDM  for comparison. Common Parameters: $S_0=K=10$, $r=5\%$, $q=1.5\%$, $\sigma=20\%$, $p_{m,z}=1/m$, $M=5$ and $T=10$. In FDM: $S_*=30$, $\delta S=0.1$, $\delta t=0.1$. In FFT: $N_x=2^{12}$, $x_{min}=-10$ and $x_{max}=10$.}\label{ComparisonI}
\end{table}
\vspace{10pt}

  In Figure \ref{lambda-ESO}, we plot the ESO cost as a function of the exercise intensity $\lambda$ for $T=5$, 8 and 10. It shows that the ESO is decreasing and convex with respect to $\lambda$. An employee with a high exercise intensity tends to exercise the ESOs earlier than those with a lower  exercise intensity. Since the call option value increases with maturity, exercising the ESO  earlier will result in a lower cost.  As the exercise intensity increases from zero, the ESO cost tends to decrease faster than when the exercise intensity is higher. Moreover, Figure \ref{lambda-ESO} also shows that as $\lambda$ increases, the ESO costs associated with different maturities $T=5$, 8 and 10 get close to each other. The intuition is that  when $\lambda$ is large, the options  will be exercised very early and the maturity will not have a significant impact on the option values.

On the right panel of Figure \ref{lambda-ESO}, we plot the option value as a function of stock price $S_0$ with exercise intensity  $\lambda = 0$, 1 or $5$.  It shows that, as $\lambda$ increases from 0 to 1, the option value decreases rapidly. When the  exercise intensity is very high, i.e. $\lambda=5$ in the figure, there is a  high chance  of immediate exercise, so the ESO value is seen to be very close to the ESO payoff   $(S_0-K)^+$.  

Next, we consider  the effect of the total number of ESOs granted. Intuitively we expect the total cost  to increase as the number of options $M$ increases, but the effect is far from linear.  In Figure \ref{M-AverageExercise}  we see that the average per-unit cost and average time to exercise are increasing as $M$ increases.  In other words, under the assumption that the ESOs will be exercised gradually, a larger ESO grant has an indirect effect of delaying exercises, and thus leading to higher ESO costs. 
  The increasing trends hold  for different exercise intensities, but the rate of increase diminishes significantly for large $M$. Also, the higher the exercise intensity, the lower the per-unit cost and shorter averaged time to exercise.

 \begin{figure}[h]
  \centering
    \includegraphics[width=2.9in]{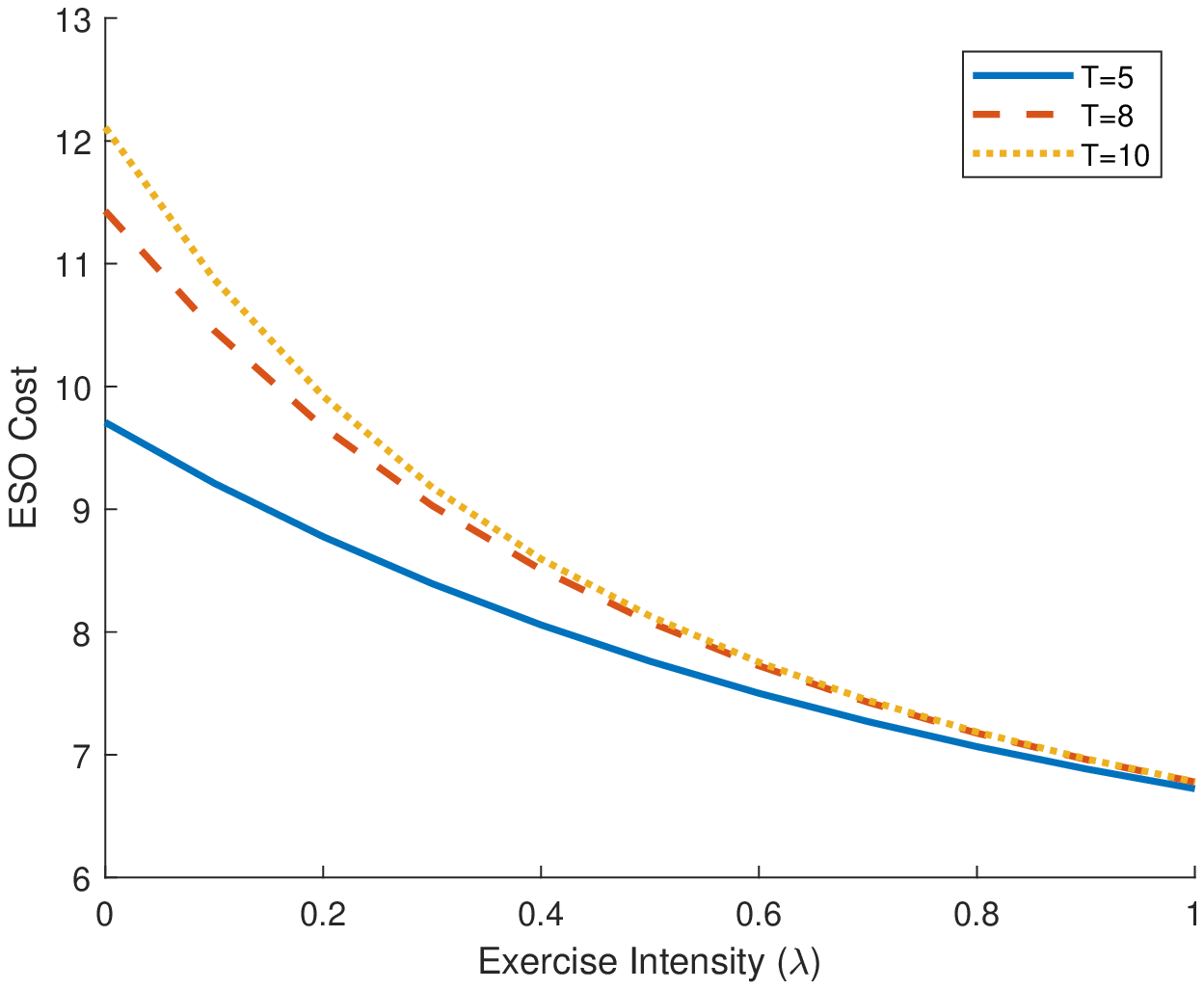}  \includegraphics[width=2.9in]{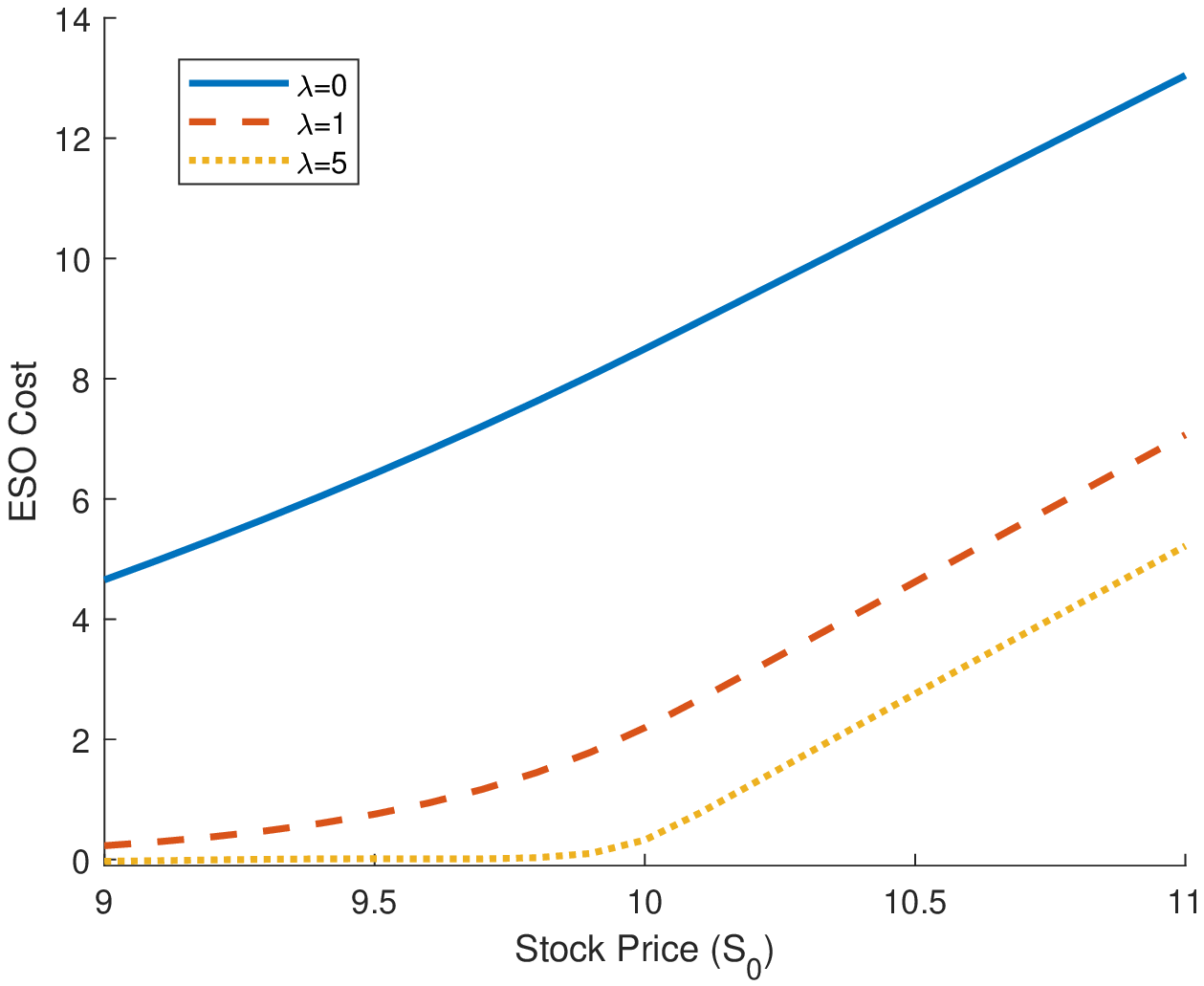}\\
  \caption{Left:   ESO cost as a function of employee exercise intensity  $\lambda$ when the maturity $T= 5, 8$ or $10$.  Right: ESO cost as a function of initial stock price $S_0$ with $\lambda = 0, 1$, or $5$.  Parameters: $K=10$, $r=5\%$, $q=1.5\%$, $\sigma=20\%$, $p_{m,z}=1/m$, $M=5$, $T=10$, $t_v=0$ and $\beta=0.1$. In FFT: $N_x=2^{12}$, $x_{min}=-10$, $x_{max}=10$.}\label{lambda-ESO}
\end{figure}
\begin{figure}[h]
  \centering
  \includegraphics[width=2.9in]{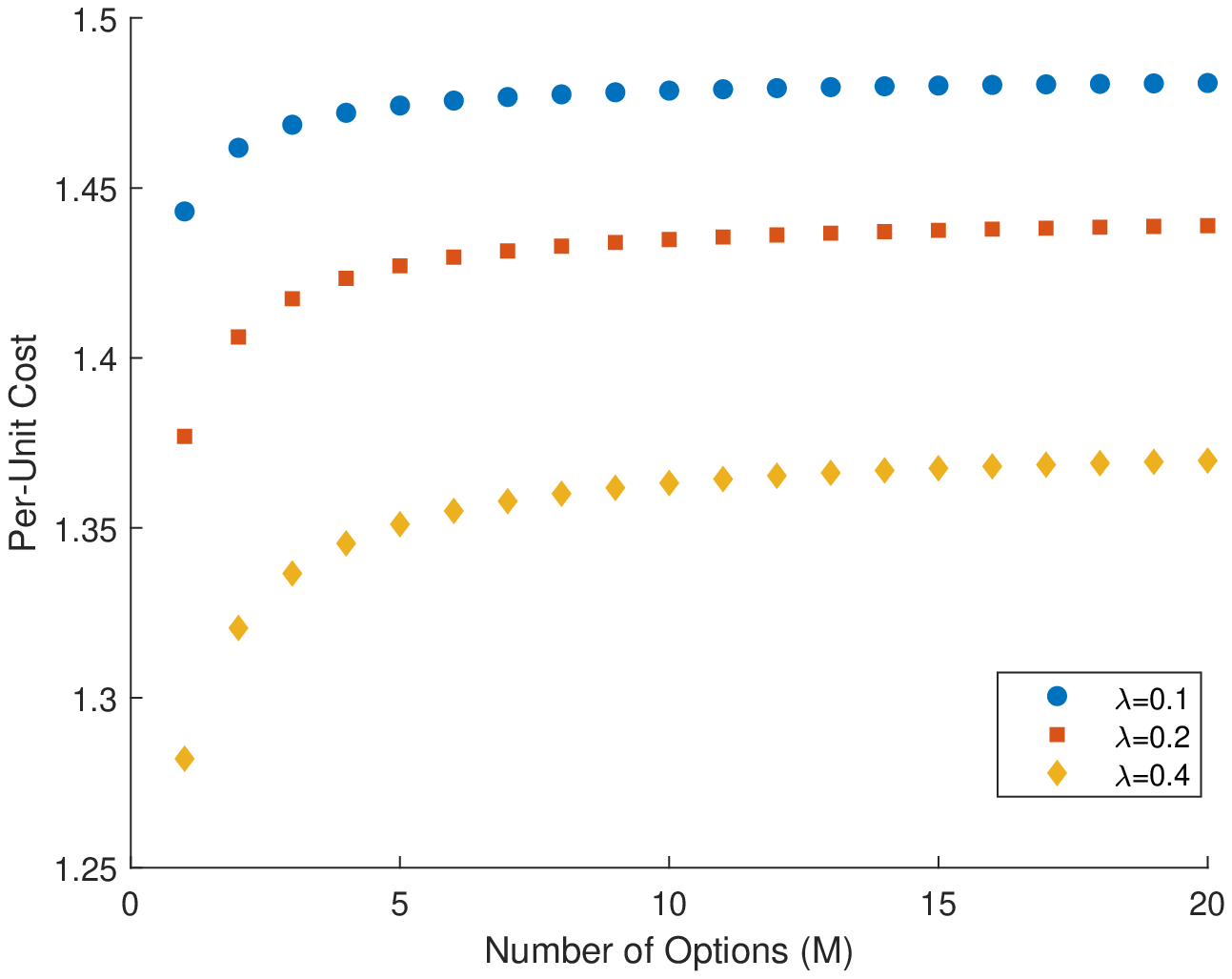}
  \includegraphics[width=2.9in]{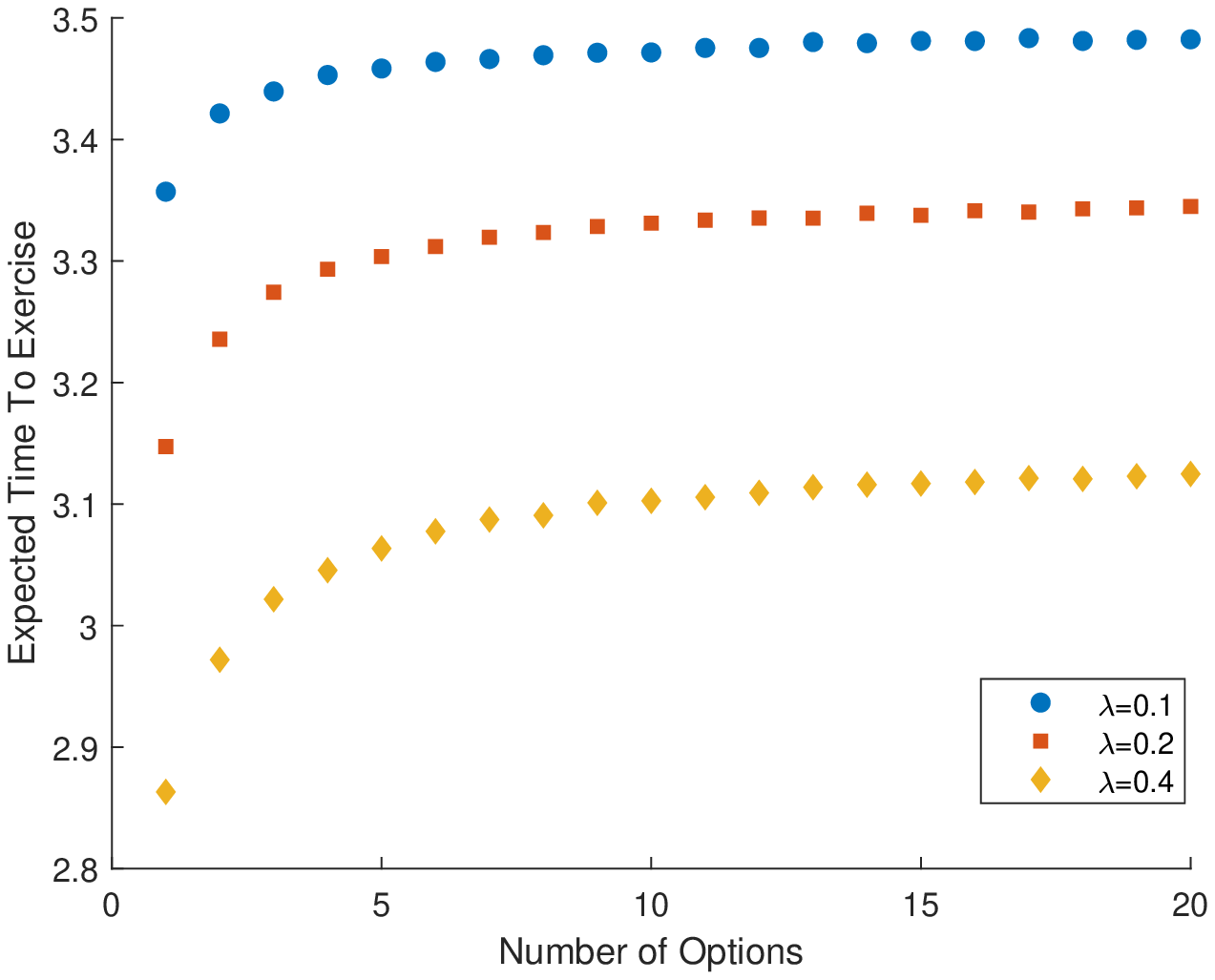}    \caption{  Left:  Per-unit ESO cost as a function of number of options granted  $M$ with different exercise intensities  $\lambda$. Right:  With $M=20$ options under different exercise intensities  $\lambda$, we calculate the average exercise times by Monte-Carlo simulation with $10^8$ simulated paths of exercise process.  Common parameters: $S_0=K=10$, $r=5\%$, $q=1.5\%$, $\sigma=20\%$, $p_{m,z}=1/m$, $T=10$, $t_v=1$, $\beta=0.5$ and
  $\alpha=0.1$. In FFT: $N_x=2^{12}$, $x_{min}=-10$, $x_{max}=10$.}\label{M-AverageExercise}
\end{figure}
\clearpage

%



\newpage

\section{Stochastic Exercise Intensity}\label{sect-stochastic}
Now we  discuss the stochastic exercise intensity, an extension to the previous model, that $\lambda_t=\lambda(t,S_t)$, which is the function not only depends on the time $t$ also depends on the stock price $S_t$. Accordingly, the corresponding vested ESO cost $C^{(m)}(t,s)$ will satisfy
\begin{equation}\label{StochasticExerciseIntensity}
\begin{aligned}
-(r+\lambda(t,s)+\beta)C^{(m)}+C^{(m)}_t+\mathcal{L}C^{(m)}+ \lambda(t,s)\sum_{z=1}^{m-1} p_{m,z}C^{(m-z)}\quad&\\
 +  \left(\lambda(t,s)\bar p_m+m\beta\right) (s-K)^+=0,&
\end{aligned}
\end{equation}
for $m=1,\ldots,M$, and $(t,s)\in[t_v,T]\times\R_+$, with terminal condition $C^{(m)}(T,s) =  m(s-K)^+$, for $s\in\R_+$.

Since we only discuss the stochastic exercise intensity and the employee will not exercise the option during the vesting period, the PDE for unvested ESO will remain unchanged. Next, we will discuss how to numerically solve (\ref{StochasticExerciseIntensity}) by FFT.

%
%

%
For  applying Fourier transform, we use the same notation as  in Section \ref{FFT} that
\begin{equation}
f^{(m)}(t,x)=C^{(m)}(t,Ke^x),
\end{equation}
for $m=1,\ldots,M$, $(t,x)\in[t_v,T]\times\R$, and 
\begin{equation}
\mathcal{F}[f^{(m)}](t,\omega)=\int_{-\infty}^{\infty}f^{(m)}(t,x)e^{-i\omega x}dx,\label{FourierTransformS}
\end{equation}
for $m=1,\ldots,M$. In this section, we assume that $\lambda(t,x)=A(t)-B(t)x$, for some positive time-dependent functions $A(t)$ and $B(t)$. For implementation, we assume $B(t)$ be relative small, such that $\lambda(t,x)$ stay positive in the truncated space $(t_v,T)\times[-x_{max}, x_{max}]$. Then, $f^{(m)}(t,x)$ satisfies
\begin{equation}
\begin{aligned}
-(r+A(t)-B(t) x+\beta)f^{(m)}+f^{(m)}_t+\widetilde{\mathcal{L}}f^{(m)}+ (A(t)-B(t) x)\sum_{z=1}^{m-1} p_{m,z}f^{(m-z)}
&\\
  +\bigg(\big(A(t)-B(t)x\big)\bar p_m+m\beta\bigg) (Ke^x-K)^+=0,&\label{FourierPDE1}
\end{aligned}
\end{equation}
where $\widetilde{\mathcal{L}}$ is defined  in (\ref{Lf}).  The terminal condition is  $f^{(m)}(T,x) =  m(Ke^x-K)^+$, for $x\in[-x_{max}, x_{max}]$.

Using   (\ref{FourierTransformS}) and the property of Fourier transform that $\mathcal{F}[xf](t,\omega)=i\partial_\omega\mathcal{F}[f](t,\omega)$, we transform PDE \eqref{FourierPDE1} into  
\begin{align}
\frac{\partial}{\partial t}\mathcal{F}[f^{(m)}](t,\omega)+iB(t)\frac{\partial}{\partial \omega}\mathcal{F}[f^{(m)}](t,\omega)-h(t,\omega)\mathcal{F}[f^{(m)}](t,\omega)+\psi^{(m)}(t,\omega)=0,\label{FourierPDE2}
\end{align}
\newpage
where
\begin{align}
h(t,\omega)&=-(r-q-\frac{\sigma^2}{2})i\omega+\frac{\sigma^2\omega^2}{2}+r+A(t)+\beta,\\
\psi^{(m)}(t,\omega)&=\sum_{z=1}^{m-1} p_{m,z}\mathcal{F}[\lambda f^{(m-z)}](t,\omega)  +  \mathcal{F}[\left(\lambda\bar p_m+m\beta\right) (Ke^x-K)^+](t,\omega),
\end{align}
for $(t,\omega)\in[t_v,T)\times\mathbb{R}$.

Observe that  (\ref{FourierPDE2}) is a first-order PDE with terminal condition that $\mathcal{F}[f^{(m)}](T,\omega) =  m \varphi(\omega)$  (see \eqref{varphi}). Therefore, we apply the method of characteristics and get 
\begin{align}
\mathcal{F}&[f^{(m)}](t,\omega)\notag\\
&=e^{-\int_t^Th\big(s,\omega-i\int_s^{t}B(u)du\big)ds}\mathcal{F}[f^{(m)}]\bigg(T,\omega+i\int_t^{T}B(u)du\bigg)+\int_t^T g^{(m)}(\tau,\omega;t)d\tau.\label{Fbg}
\end{align}
where \begin{equation}\label{gt}
g^{(m)}(\tau,\omega;t)=e^{-\int_t^\tau h\big(s,\omega-i\int_s^{t}B(u)du\big)ds}\psi^{(m)}\bigg(\tau,\omega+i\int^\tau_{t}B(u)du\bigg).
\end{equation} 

%
For numerical implementation, we can use the similar method mentioned in Section \ref{FFT}. We can make the approximation
\begin{align}
\int_t^T g^{(m)}(\tau,\omega;t)d\tau\approx\bigg(\frac{1}{2}g^{(m)}(t,\omega;t)+\frac{1}{2}g^{(m)}(T,\omega;t)+\sum_{i=1}^{i=N-1}g^{(m)}(t+i\delta t,\omega;t)\bigg)\delta t,\label{Approx-g}
\end{align}
where $\delta t=(T-t)/N$. The integral $\int B(u)du$ in \eqref{Fbg} and \eqref{gt} can be  approximated similarly or computed explicitly depending on the choice of $B(t)$.

Table \ref{ComparisonII} presents the ESO costs in the cases of constant exercise intensity $\lambda = 0.2$ and stochastic intensity with   $\lambda(s)=0.2-0.02\log(s/K)$ under different vesting periods and  job termination rates. The stochastic intensity specified here can be larger or smaller than the constant level 0.2 depending on whether the current stock price $s$ is higher or lower than the strike price $K$. For each case, we compute the ESO cost using both FFT and FDM. For the latter, we apply the Crank-Nicolson method on a uniform grid and adopt Neumann condition at the boundary $s=S_*$ (see Section \ref{FDM}).  As we can see, the costs from the two methods are practically the same.  As the vesting period lengthens, from $t_v=1$ to $t_v=4$, the ESO cost tends to increase  under different exercise intensities and job termination rates.  When  the job termination rate $\beta$ is zero, the ESO costs with stochastic intensity appear to be higher, but this effect is greatly reduced as the job termination rate increases. This is intuitive since a high job termination rate means that most ESOs will be exercised or forfeited at the departure time, rather than exercised according to an exercise process over the life of the options.

\begin{table}[!htbp]
  \centering
  \begin{tabular}{ll|cc|cc|cc}
     \hline
     \multicolumn{2}{c}{\multirow{2}{*}{Parameters}} &\multicolumn{2}{|c|}{$t_v=1$}&\multicolumn{2}{|c}{$t_v=2$}&\multicolumn{2}{|c}{$t_v=4$}\\
     \cline{3-8}
      & & FDM & FFT & FDM & FFT & FDM & FFT \\
      \hline\hline
  $\lambda=0.2$ &   \\
            \hline
      \multirow{2}{*}{$\beta=0$} & $\alpha=0$ & 12.8052  & 12.8065&   13.7122 &  13.7134 &  15.0953  & 15.0967\\
      & $\alpha=0.1$& 11.5867   &11.5878 &  11.2266 &  11.2276   & 10.1187  & 10.1196\\
     \hline
      \multirow{2}{*}{$\beta=0.5$} & $\alpha=0$ &7.8849  &  7.8859&    9.6380   & 9.6388  & 12.4022 &  12.4029\\
      & $\alpha=0.1$&7.1347  &  7.1355  &  7.8910 &   7.8916 &   8.3135  &  8.3139\\
     \hline\hline
           $\lambda(s)=0.2-0.02$\!\!\!\!\!&$*\log({s}/{K})$ &    \\
      \hline
      \multirow{2}{*}{$\beta=0$} & $\alpha=0$   & 12.8310 &  12.8379   &13.7364  & 13.7445&15.1130&15.1235\\
      & $\alpha=0.1$&11.6099  & 11.6163 &  11.2464 &  11.2531 &10.1305 &  10.1376\\
     \hline
      \multirow{2}{*}{$\beta=0.5$} & $\alpha=0$ & 7.8895   & 7.8887&    9.6428  &  9.6423&12.4068 &  12.4076\\
      & $\alpha=0.1$&   7.1387  &  7.1381  &  7.8948  &  7.8946 &8.3165  &  8.3172\\
     \hline

   \end{tabular}
  \caption{ESO costs with constant intensity $\lambda$ and stochastic exercise intensity $\lambda(s)$ with different job termination rates $\alpha$ and $\beta$ and vesting period $t_v$,  computed using FFT and FDM for comparison. Common parameters: $S_0=K=10$, $r=5\%$, $q=1.5\%$, $\sigma=20\%$, $p_{m,z}=1/m$, $M=5$, $T=10$. In FDM: $S_*=30$, $\delta S=0.1$, $\delta t=0.1$. In FFT: $N_x=2^{12}$, $X_{min}=-10$ and $X_{max}=10$.}\label{ComparisonII}
\end{table}

 \clearpage


\section{Maturity Randomization}\label{sect-mat-rand}
In this section,  we propose an alternative method to value ESOs. It is an analytical method that yields an approximation to the original ESO valuation problem discussed in Section \ref{sect-ESOPDE}. The core idea of this method is to randomize the ESO's finite maturity by an exponential random variable $\tau\sim\exp(\kappa)$, with $\kappa=1/T$ where $T$ here is original constant maturity. Such a choice of parameter means that  $\E[\tau]=T$; that is, the ESO is expected to expire at time $T$. For instance, if the maturity of the ESOs is 10 years, then the randomized maturity is modeled by $\tau\sim\exp(0.1)$. Such a maturity randomization allows us to derive an explicit approximation for ESO cost. 

\subsection{Vested ESO}
First we consider  the ESO cost at the end of the vesting period. Provided that the employee remains at the firm by time $t_v$,   the vested ESO has a remaining maturity of length $T-t_v$. Therefore, for the exponentially distributed maturity $\tau\sim\exp(\kappa)$, one may set $\kappa=1/(T-t_v)$.  At time $t_v$, the vested ESO cost function $C^{(m)}(s)$ is given by
\begin{align}
C^{(m)}(s)=& \E\bigg\{ \int_{t_v}^{\tau\wedge \xi} e^{-r (u-t_v)} (S_u -K)^+ dL_u \notag \\
&+ e^{-r(\tau\wedge \xi-t_v)} (M- L_{\tau\wedge \xi}) (S_{\tau\wedge \xi}-K)^+\,|\, S_{t_v} = s, L_{t_v} = M-m, \tau\wedge\xi\ge {t_v}\bigg\}\\
=& \E\bigg\{ \int_{t_v}^\infty e^{-(r+\kappa+\beta) (u-t_v)} (S_u -K)^+ dL_u \notag \\
&+ \int_{t_v}^\infty (\kappa+\beta)e^{-(r+\kappa+\beta)(u-t_v)} (M- L_u) (S_u-K)^+du\,|\, S_{t_v} = s, L_{t_v} = M-m\bigg\}\\
=& \E\bigg\{ \int_{0}^\infty e^{-(r+\kappa+\beta) u} (S_u -K)^+ dL_u \notag \\
&+ \int_0^\infty (\kappa+\beta)e^{-(r+\kappa+\beta)u} (M- L_u) (S_u-K)^+du\,|\, S_0 = s, L_0 = M-m \bigg\},\label{MR}
\end{align}
for $m=1,\ldots,M$.  From  (\ref{MR}), we derive the associated  ODE for $C^{(m)}(s)$. For the convenience, we denote
\begin{equation}
\begin{aligned}
a_0=-(r+\lambda+\beta+\kappa),\quad a_1=r-q,\quad a_2=\frac{\sigma^2}{2},\quad
g_m=  \lambda\bar p_m+m (\beta+ \kappa ).
\end{aligned}
\end{equation}
Then,  we obtain a system of second-order linear ODEs: 
\begin{equation}\label{RMT_PDE}
a_0C^{(m)}+a_1s\frac{d}{ds}C^{(m)}+a_2s^2\frac{d^2}{ds^2}C^{(m)}+\lambda\sum_{z=1}^{m-1}p_{m,z} C^{(m-z)}  +g_m(s-K)^+=0,
\end{equation}
for $m=1,\ldots,M$, and $s\in\R_+$, with the boundary condition $C^{(m)}(0) =  0$.


 \newpage
\begin{proposition}\label{prop-RM}
The solution to the ODE system (\ref{RMT_PDE}) is
\begin{equation}
C^{(m)}(s)=
\left\{
\begin{aligned}
&A_ms+B_mK+\sum_{n=0}^{m-1}E_{m,n}[\ln(\frac{s}{K})]^n(\frac{s}{K})^{\gamma-\theta};&\quad \mbox{if }s>K,\\
&\sum_{n=0}^{m-1}F_{m,n}[\ln(\frac{s}{K})]^n(\frac{s}{K})^{\gamma+\theta};&\quad \mbox{if }0\le s\le K,\label{RM_vestedESOsolution}
\end{aligned}
\right.
\end{equation}
for $m=1,\ldots,M$, where
 \begin{equation}
\left\{
\begin{aligned}
A_m&=\frac{1}{a_1+a_0}\left(-\lambda\sum_{z=1}^{m-1}p_{m,z}A_{m-z}-g_m\right), &\\
B_m&=\frac{1}{a_0}\left(-\lambda\sum_{z=1}^{m-1}p_{m,z}B_{m-z}+g_m\right),  &\\
E_{1,0}&=-\frac{(A_1+B_1)K(\gamma+\theta)-A_1K}{2\theta},\\
F_{1,0}&=-\frac{(A_1+B_1)K(\gamma-\theta)-A_1K}{2\theta},\\
E_{m,m-1}&=-\frac{\lambda p_{m,1}E_{m-1,m-2}}{(m-1)[a_1+2a_2(\gamma-\theta)-a_2]}, &\quad \mbox{for }m\ge 2,\\
F_{m,m-1}&=-\frac{\lambda p_{m,1}F_{m-1,m-2}}{(m-1)[a_1+2a_2(\gamma+\theta)-a_2]},&\quad \mbox{for }m\ge 2,\\
E_{m,n}&=-\frac{\lambda\sum_{z=1}^{m-n}p_{m,z}E_{m-z,n-1}+(n+1)na_2E_{m,n+1}}{n[a_1+2a_2(\gamma-\theta)-a_2]}, &\quad \mbox{for }1\le n\le m-2,\\
F_{m,n}&=-\frac{\lambda\sum_{z=1}^{m-n}p_{m,z}F_{m-z,n-1}+(n+1)na_2F_{m,n+1}}{n[a_1+2a_2(\gamma+\theta)-a_2]}, &\quad \mbox{for }1\le n\le m-2,\\
E_{m,0}&=-\frac{(A_m+B_m)K(\gamma+\theta)-A_mK+F_{m,1}-E_{m,1}}{2\theta},&\quad \mbox{for }m\ge 2,\\
F_{m,0}&=-\frac{(A_m+B_m)K(\gamma-\theta)-A_mK+F_{m,1}-E_{m,1}}{2\theta},&\quad \mbox{for }m\ge 2,\\
\end{aligned}
\right.
\end{equation}
and
\begin{align}
\gamma=\frac{1}{2}-\frac{r-q}{\sigma^2},\quad \theta=\sqrt{\gamma^2+\frac{2(r+\lambda+\beta+\kappa)}{\sigma^2}}.
\end{align}
\end{proposition}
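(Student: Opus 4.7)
My plan is to proceed by induction on $m$, using the change of variable $x = \ln(s/K)$ in each of the regions $s > K$ and $0 \le s \le K$ to reduce the Euler-type operator to a constant-coefficient one. Under this substitution, $s\partial_s = \partial_x$ and $s^2 \partial_{ss} = \partial_x^2 - \partial_x$, so the homogeneous operator becomes $a_2\partial_x^2 + (a_1-a_2)\partial_x + a_0$, whose characteristic polynomial $a_2\mu^2 + (a_1-a_2)\mu + a_0 = 0$ has the two real roots $\mu_\pm = \gamma \pm \theta$ (using the definitions of $\gamma$, $\theta$ and $a_0,a_1,a_2$ in the statement). The general homogeneous solution in $s$ is therefore spanned by $(s/K)^{\gamma-\theta}$ and $(s/K)^{\gamma+\theta}$, and since $\gamma - \theta < 0 < \gamma + \theta$, the boundary condition $C^{(m)}(0)=0$ and linear growth at infinity select $(s/K)^{\gamma+\theta}$ on $[0,K]$ and $(s/K)^{\gamma-\theta}$ (together with the linear piece $A_m s + B_m K$) on $[K,\infty)$.

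For the base case $m=1$, the sum term vanishes. On $s>K$, try the ansatz $A_1 s + B_1 K + E_{1,0}(s/K)^{\gamma-\theta}$; substitution into \eqref{RMT_PDE} and matching coefficients of $s$ and $K$ gives the stated values of $A_1$ and $B_1$. On $0\le s\le K$, only the homogeneous piece $F_{1,0}(s/K)^{\gamma+\theta}$ is needed. The two constants $E_{1,0}, F_{1,0}$ are then determined by continuity of value and first derivative at $s=K$, a two-equation linear system whose explicit solution yields precisely the formulas for $E_{1,0}$ and $F_{1,0}$.

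For the inductive step, assume the conclusion holds through level $m-1$. Then in the ODE for $C^{(m)}$ the coupling term $\lambda\sum_{z=1}^{m-1} p_{m,z} C^{(m-z)}$ becomes, after the change of variables, a polynomial in $x$ of degree at most $m-2$ times $e^{(\gamma\mp\theta)x}$, plus a known affine part in $s$. Because $\gamma\pm\theta$ are roots of the characteristic polynomial, applying the operator to $x^n e^{(\gamma\mp\theta)x}$ kills the $x^n$ term and leaves only a linear combination of $x^{n-1} e^{(\gamma\mp\theta)x}$ and $x^{n-2} e^{(\gamma\mp\theta)x}$ with explicit coefficients $n[a_1+2a_2(\gamma\mp\theta)-a_2]$ and $n(n-1)a_2$. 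Trying the ansatz $A_m s + B_m K + \sum_{n=0}^{m-1} E_{m,n} x^n e^{(\gamma-\theta)x}$ on $s > K$ and $\sum_{n=0}^{m-1} F_{m,n} x^n e^{(\gamma+\theta)x}$ on $0\le s \le K$, equating coefficients of $x^k$ for $k\ge 0$ gives exactly the recursions stated for $A_m, B_m$ and for $E_{m,n}, F_{m,n}$ with $1\le n\le m-1$ (with the top index $n=m-1$ being the special case where $E_{m,n+1}=F_{m,n+1}=0$).

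Finally, the constants $E_{m,0}$ and $F_{m,0}$ are pinned down by the two pasting conditions at $s = K$: continuity of $C^{(m)}$ and of $(C^{(m)})'$. At $s=K$ (i.e.\ $x=0$) only the $n=0$ powers of $\ln(s/K)$ contribute to the value, and only the $n=0,1$ powers contribute to the first derivative, which yields the $2\times 2$ linear system
\begin{align*}
F_{m,0} - E_{m,0} &= (A_m + B_m)K,\\
F_{m,0}(\gamma+\theta) - E_{m,0}(\gamma-\theta) &= A_m K + E_{m,1} - F_{m,1},
\end{align*}
whose solution is exactly the stated closed form for $E_{m,0}$ and $F_{m,0}$. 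The main bookkeeping obstacle is the index shuffling in the inductive step, specifically verifying that the coefficient of $x^{n-1} e^{(\gamma\mp\theta)x}$ produced by applying the operator to $E_{m,n}x^n e^{(\gamma\mp\theta)x}$ matches precisely the coefficient coming from $\lambda\sum_{z=1}^{m-n}p_{m,z}E_{m-z,n-1}x^{n-1} e^{(\gamma\mp\theta)x}$ plus the $(n+1)na_2 E_{m,n+1}$ contribution from the $n+1$ term's second derivative; carrying this through carefully yields the recursions as written.
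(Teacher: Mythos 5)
Your proposal is correct and follows essentially the same route as the paper: substitute the ansatz (equivalently, the general solution after the $x=\ln(s/K)$ change of variables), read off the coefficient recursions from the resonant terms $[\ln(s/K)]^n(s/K)^{\gamma\mp\theta}$, and pin down $E_{m,0},F_{m,0}$ by matching value and first derivative at $s=K$. The only minor difference is how the unwanted homogeneous branch $(s/K)^{\gamma+\theta}$ is excluded for $s>K$: you invoke linear growth of $C^{(m)}$ at infinity, while the paper argues via the limit $\kappa\to\infty$ forcing $C^{(1)}(s)\to(s-K)^+$; both serve the same purpose.
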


%

\begin{proof}
We begin by considering the case that the employee only holds a single option. With $M=1$, the  general solution to ODE \eqref{RMT_PDE} is given by
\begin{equation}\label{c1s}
C^{(1)}(s)=
\left\{
\begin{aligned}
&A_1s+B_1K+E_{1,0}(\frac{s}{K})^{\gamma-\theta}+\tilde {E}_{1,0}(\frac{s}{K})^{\gamma+\theta};&\quad \mbox{if }s>K,\\
&F_{1,0}(\frac{s}{K})^{\gamma+\theta}+\tilde {F}_{1,0}(\frac{s}{K})^{\gamma-\theta};&\quad \mbox{if }0\le s\le K,
\end{aligned}
\right.
\end{equation}
where 
\begin{equation}
A_1=-\frac{g_1}{a_1+a_0}, \quad B_1=\frac{g_1}{a_0}.  
\end{equation}
By imposing that $C^{(1)}(s)$ and $\frac{d}{ds}C^{(1)}(s)$ to be  continuous at the strike price $K$, we consider $C^{(1)}(s)$ at $s=K$ and obtain 
\begin{align}
&\bcm 1&1\\ \gamma-\theta& \gamma+\theta \ecm \bcm E_{1,0}-\tilde F_{1,0}\\\tilde E_{1,0}-F_{1,0}\ecm=-K\bcm A_1+B_1\\A_1\ecm \\
\Rightarrow & \bcm E_{1,0}-\tilde F_{1,0}\\\tilde E_{1,0}-F_{1,0}\ecm=-\frac{K}{2\theta}\bcm (\gamma+\theta)(A_1+B_1)-A_1\\ -(\gamma-\theta)(A_1+B_1)+A_1\ecm.
\end{align}
 In addition, since $\gamma-\theta<0$, we will have $\tilde{F}_{1,0}=0$ to guarantee that $C^{(1)}(0)=0$. And, when $\kappa\rightarrow \infty$, the maturity $\tau\rightarrow 0 $, $\mathbb{P}$-a.s., which will lead to $C^{(1)}(s)\rightarrow(s-K)^+$. Therefore, we have $\tilde {E}_{1,0}=0$. As a result, we obtain the remaining non-zero coefficients:
\begin{equation}
\bcm E_{1,0}\\F_{1,0}\ecm=-\frac{K}{2\theta}\bcm (\gamma+\theta)(A_1+B_1)-A_1\\ (\gamma-\theta)(A_1+B_1)-A_1\ecm.
\end{equation}
For $M\ge 2$, the general solution to  ODE (\ref{RMT_PDE}) is
\begin{equation}
C^{(m)}(s)=
\left\{
\begin{aligned}
&A_ms+B_mK+\sum_{n=0}^{m-1}E_{m,n}[\ln(\frac{s}{K})]^n(\frac{s}{K})^{\gamma-\theta}&\quad \mbox{if }s>K,\\
&\sum_{n=0}^{m-1}F_{m,n}[\ln(\frac{s}{K})]^n(\frac{s}{K})^{\gamma+\theta}&\quad \mbox{if }0\le s\le K.\label{RM_vestedESOsolution}
\end{aligned}
\right.
\end{equation}
Applying ODE (\ref{RMT_PDE}), we obtain the relationship between the coefficients of $C^{(m)}(s)$ and the coefficients of $C^{(n)}(s)$, for $n\le m-1$, as follows:
 \begin{equation}
\left\{
\begin{aligned}
A_m&=\frac{1}{a_1+a_0}\left(-\lambda\sum_{z=1}^{m-1}p_{m,z}A_{m-z}-g_m\right), &\\
B_m&=\frac{1}{a_0}\left(-\lambda\sum_{z=1}^{m-1}p_{m,z}B_{m-z}+g_m\right),&\\
E_{m,m-1}&=-\frac{\lambda p_{m,1}E_{m-1,m-2}}{(m-1)[a_1+2a_2(\gamma-\theta)-a_2]}, &\\
F_{m,m-1}&=-\frac{\lambda p_{m,1}F_{m-1,m-2}}{(m-1)[a_1+2a_2(\gamma+\theta)-a_2]},&\\
E_{m,n}&=-\frac{\lambda\sum_{z=1}^{m-n}p_{m,z}E_{m-z,n-1}+(n+1)na_2E_{m,n+1}}{n[a_1+2a_2(\gamma-\theta)-a_2]}, &\quad \mbox{for }1\le n\le m-2,\\
F_{m,n}&=-\frac{\lambda\sum_{z=1}^{m-n}p_{m,z}F_{m-z,n-1}+(n+1)na_2F_{m,n+1}}{n[a_1+2a_2(\gamma+\theta)-a_2]}, &\quad \mbox{for }1\le n\le m-2,\\
\end{aligned}
\right.
\end{equation}
for $m=2,\ldots,M$.

In addition,   the continuity of $C^{(m)}(s)$ and $\frac{d}{ds}C^{(m)}(s)$ around strike price $K$ yields that 
 \begin{equation}
\left\{
\begin{aligned}
(A_m+B_m)K+E_{m,0}=F_{m,0},\\
A_m+(\gamma-\theta)\frac{E_{m,0}}{K}+\frac{E_{m,1}}{K}=(\gamma+\theta)\frac{F_{m,0}}{K}+\frac{F_{m,1}}{K}.
\end{aligned}
\right.
\end{equation}
Rearranging, we obtain the remaining coefficients for the solution:
 \begin{equation}
\left\{
\begin{aligned}
E_{m,0}=-\frac{(A_m+B_m)K(\gamma+\theta)-A_mK+F_{m,1}-E_{m,1}}{2\theta},\\
F_{m,0}=-\frac{(A_m+B_m)K(\gamma-\theta)-A_mK+F_{m,1}-E_{m,1}}{2\theta}.
\end{aligned}
\right.
\end{equation}
\end{proof}
\subsection{Unvested ESO}

For the unvested ESO, we can model the vesting time $t_v$ by the exponential random variable $\tau_v\sim\exp(\tilde \kappa)$, where $\tilde \kappa=1/t_v$. Then, the unvested ESO cost at time $0$ is given by 
\begin{align}
\tilde C^{(m)}(s)&=\E\bigg\{e^{-(r+\alpha)\tau_v}C^{(m)}(S_{\tau_v})\bigg|S_0=s \bigg\}\\
&=\E\bigg\{\int_0^\infty \tilde \kappa e^{-(r+\alpha+\tilde\kappa)u}C^{(m)}(S_{u})du\bigg|S_0=s\bigg\}.
\end{align}
Then we will can derive the ODE for $\tilde C^{(m)}(s)$:
\begin{equation}
\begin{aligned}
-(r+\alpha+\tilde\kappa)\tilde C^{(m)}+(r-q)s\frac{d}{ds}\tilde C^{(m)}+\frac{\sigma^2s^2}{2}\frac{d^2}{ds^2}\tilde C^{(m)}+\tilde\kappa C^{(m)}=0&\quad\mbox{for }s\in\R_+,\\
\tilde C^{(m)}(0)=0.
\end{aligned}
\end{equation}
Assuming $\lambda+\beta+\kappa\neq\alpha+\tilde\kappa$, we could derive the solution for $\tilde C^{(m)}$ from the solution for $C^{(m)}$ in (\ref{RM_vestedESOsolution}), which is
\begin{equation}
\tilde C^{(m)}(s)=
\left\{
\begin{aligned}
&\tilde A_ms+\tilde B_mK+\sum_{n=0}^{m-1}\tilde E_{m,n}[\ln(\frac{s}{K})]^n(\frac{s}{K})^{\gamma-\theta}+\tilde E_m(\frac{s}{K})^{\tilde\gamma-\tilde\theta}&\quad \mbox{if }s>K,\\
&\sum_{n=0}^{m-1}\tilde F_{m,n}[\ln(\frac{s}{K})]^n(\frac{s}{K})^{\gamma+\theta}+\tilde F_m(\frac{s}{K})^{\tilde\gamma+\tilde\theta}&\quad \mbox{if }0\le s\le K,
\end{aligned}
\right.
\end{equation}
where
\begin{equation}
\left\{
\begin{aligned}
\tilde\gamma&=\gamma=\frac{1}{2}-\frac{r-q}{\sigma^2},\\
\tilde\theta&=\sqrt{\tilde\gamma^2+\frac{2(r+\alpha+\tilde\kappa)}{\sigma^2}},\\
\tilde A_m&=\frac{\tilde\kappa A_m}{q+\alpha+\tilde\kappa},\\
\tilde B_m&=\frac{\tilde\kappa B_m}{r+\alpha+\tilde\kappa},
\end{aligned}
\right.
\end{equation}
and
\begin{equation}
\left\{
\begin{aligned}
\tilde E_{m,m-1}&=-\frac{\tilde\kappa E_{m,m-1}}{R},\\
\tilde F_{m,m-1}&=-\frac{\tilde\kappa F_{m,m-1}}{R},\\
\tilde E_{m,m-2}&=-\frac{\tilde\kappa E_{m,m-2}+(m-1)P_1\tilde E_{m,m-1}}{R},\\
\tilde F_{m,m-2}&=-\frac{\tilde\kappa F_{m,m-2}+(m-1)Q_1\tilde F_{m,m-1}}{R},\\
\tilde E_{m,n}&=-\frac{2\tilde\kappa E_{m,n}+2(n+1)P_1\tilde E_{m,n+1}+\sigma^2(n+2)(n+1)\tilde E_{m,n+2}}{2R}\quad\mbox{for }0\le n\le m-3,\\
\tilde F_{m,n}&=-\frac{2\tilde\kappa F_{m,n}+2(n+1)Q_1\tilde F_{m,n+1}+\sigma^2(n+2)(n+1)\tilde F_{m,n+2}}{2R}\quad\mbox{for }0\le n\le m-3,\\
\tilde E_m&=\frac{(\tilde\gamma+\tilde\theta)P-Q}{2\tilde\theta},\\
\tilde F_m&=\frac{(\tilde\gamma-\tilde\theta)P-Q}{2\tilde\theta},
\end{aligned}
\right.
\end{equation}
with
\begin{equation}
\left\{
\begin{aligned}
R&=\lambda+\beta+\kappa-\alpha-\tilde\kappa,\\
P_1&=r-q+\frac{\sigma^2(2\gamma-2\theta-1)}{2},\\
Q_1&=r-q+\frac{\sigma^2(2\gamma+2\theta-1)}{2},\\
P&=\tilde F_{m,0}-\tilde E_{m,0}-K\tilde A_m-K\tilde B_m,\\
Q&=(\gamma+\theta)\tilde F_{m,0}-(\gamma-\theta)\tilde E_{m,0}-K\tilde A_m+\tilde F_{m,1}-\tilde E_{m,1}.
\end{aligned}
\right.
\end{equation}

Alternatively, one can use FDM or FFT to calculate the   unvested ESO cost without applying maturity randomization for the second time. \\



 In Figure  \ref{ESOCost}, we show the cost of an unvested ESO, computed by our maturity randomization method, as a function of the initial stock price $S_0$, along with the ESO payoff.  As expected, the ESO cost is increasing convex in $S_0$. Comparing the costs corresponding to two different job termination rates  $\alpha \in\{0.01,0.1\}$ during the vesting period, we see that a higher job termination rate reduces the ESO value. This is intuitive as the employee has a higher chance of leaving the firm during the vesting period and in turn losing the option entirely.

The maturity randomization  method delivers an analytical approximation that allows for instant computation.   In Figure \ref{Respective}, we examine errors of this method. As we can see, as the exercise intensity $\lambda$ or post vesting job termination rate $\beta$ increases the valuation error decreases exponentially to less than a penny for each option.  This shows that the maturity randomization method can be very accurate and effective for ESO valuation. 


\begin{figure}[h]
  \centering
  \includegraphics[width=3in]{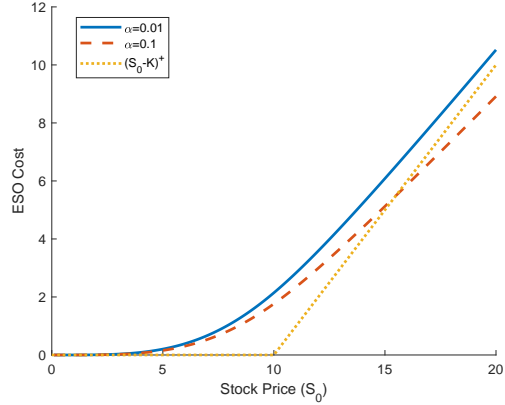}\\
  \caption{The ESO cost computed using the maturity randomization method, and plotted as a function of stock price $S_0$ with two different   job termination rates $\alpha=0.01, 0.1$, along with the ESO payoff function $(S_0-K)^+$ for comparison. Parameters: $T=10$, $t_v=2$, $\kappa=0.125$, $\tilde \kappa=0.5$, $r=5\%$, $q=1.5\%$, $\sigma=20\%$, $\lambda=0.1$ and $\beta=0.1$.}\label{ESOCost}
\end{figure}

\vspace{10pt}

\begin{figure}[h]
\centering
\subfigure[ ]
{\includegraphics[width=3in, angle=0]{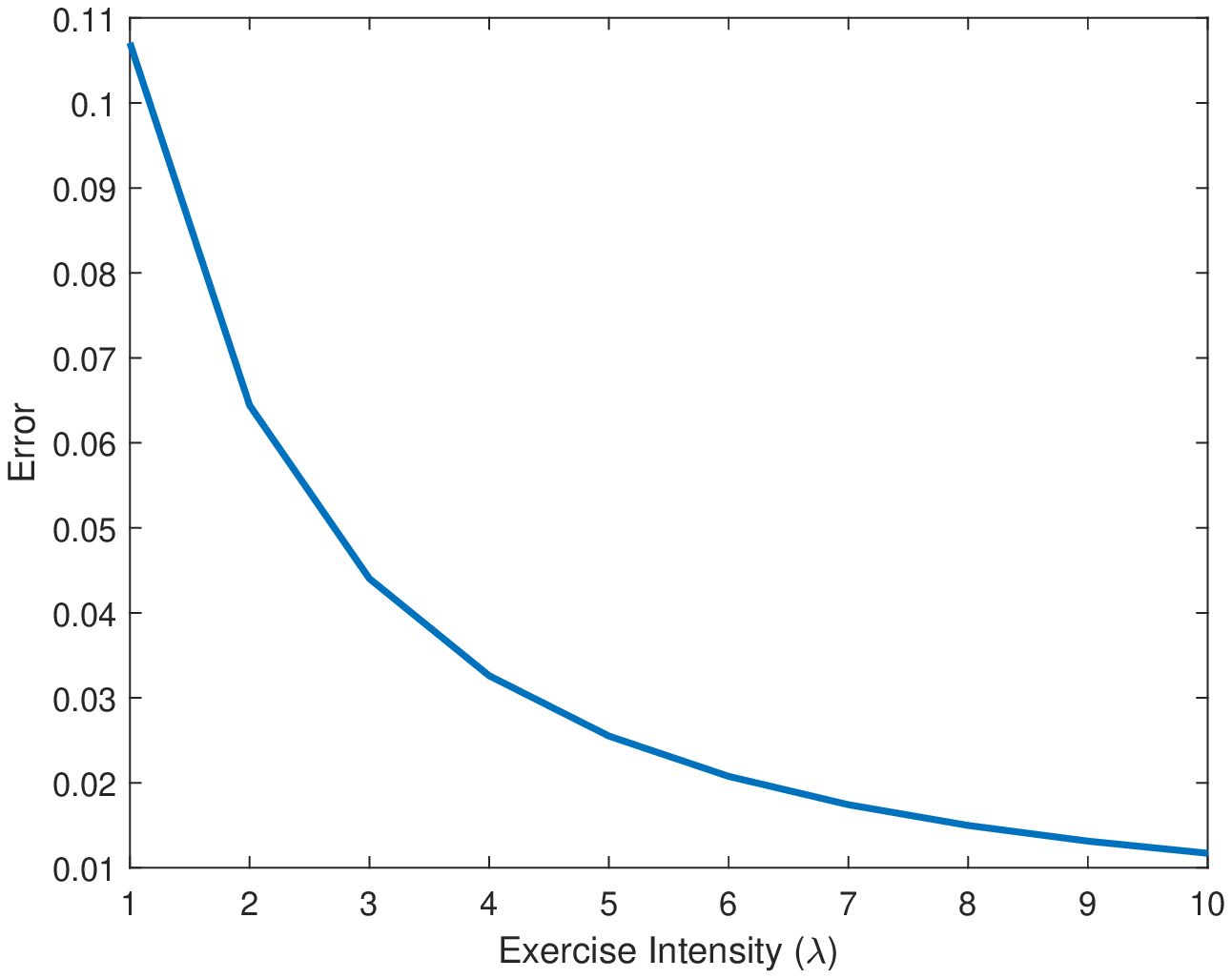}
}
\subfigure[ ]
{\includegraphics[width=3in, angle=0]{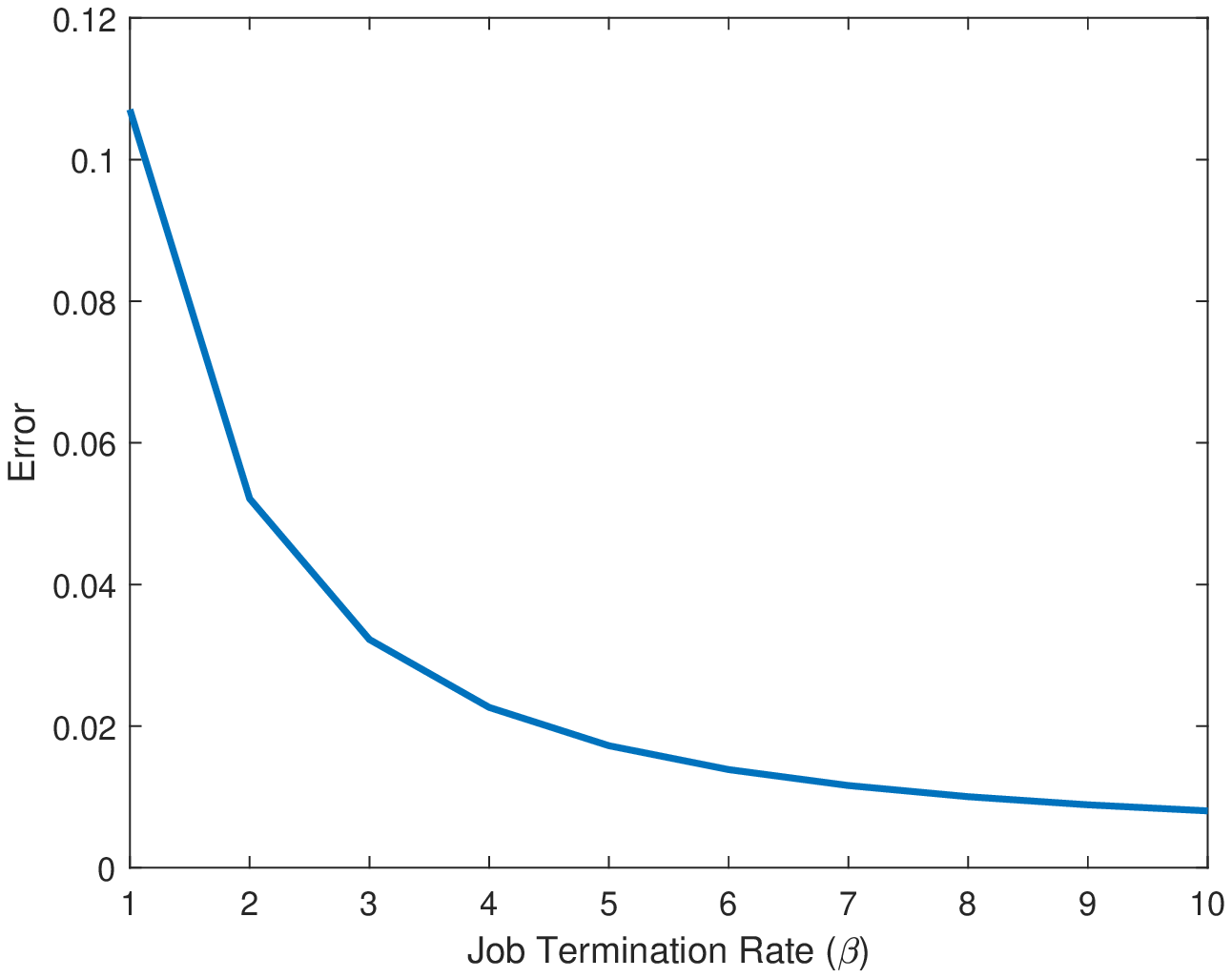}
}
\caption{Plots of the errors of maturity randomization method as a function of exercise intensity $\lambda$ and job termination rate $\beta$ respectively in (a) and (b). We fix $\beta=1$ in (a) and $\lambda=1$ in (b). Parameters: $S_0=K=10$, $T=10$, $t_v=0$, $\kappa=0.1$, $\tilde\kappa=0$, $r=5\%$, $q=1.5\%$, $\sigma=20\%$, $p_{m,z}=1/m$ and $M=5$. Common parameters: $\lambda=1$ and $\beta=1$.}\label{Respective}
\end{figure}
\clearpage

\section{Implied Maturity}
Given that ESOs are very likely to be exercised prior to expiration, the total cost of an ESO grant is determined by how long the employee effectively holds the options. For each grant of $M$ options, the exercise times are different and they depend on the valuation model and associated parameters. Therefore, we introduce the notion of implied maturity to give an intuitive measure of the effective maturity implied by any given valuation model. 

Like the well-known concept of implied volatility, we use the Black-Scholes   option pricing formula. The price of a European call with strike $K$ and maturity $T$ is given by
\begin{equation}
C_{BS}(S_t, {T}) = e^{-q(T-t)}S_t\Phi(d_1)-e^{-r(T-t)}K\Phi(d_2),
\end{equation}
where
\begin{equation}
d_1=\frac{1}{\sigma\sqrt{T-t}}\left[\ln\bigg(\frac{S_t}{K}\bigg)+\bigg(r+\frac{\sigma^2}{2}\bigg)(T-t)\right],\quad d_2=d_1-\sigma\sqrt{T-t}.
\end{equation}
Next, recall the ESO cost function $C^{(m)}(t,s)$ under the top-down valuation model in Section \ref{sect-ESOmodel}. Then,  the \textit{implied maturity} for $m$ ESOs is defined to be the maturity parameter $\tilde{T}$ such that 
\begin{equation}\label{impliedmat}
 C_{BS}(S_0,\widetilde{T}\,) \,= \,\frac{C^{(m)}(0,S_0)}{m}
\end{equation} holds, with all other parameters held constant. To define implied maturity under another model only requires replacing the corresponding cost function on the right-hand side in \eqref{impliedmat}.

Through the lens of implied maturity, we can see the model and parameter effects in terms of how long the employee will hold the option under the Black-Scholes model.  For example, if the exercise intensity $\lambda$ increases, then the ESOs are more likely to be exercised early, resulting in a lower cost. Since the call option value is increasing in maturity, the implied maturity is expected to decrease as exercise intensity increases.  The plots in Figure \ref{lambda-IM} confirm this intuition. Moreover, under high exercise intensity all ESOs will be exercised very early and the contract maturity will play a lesser role on the ESO cost and thus implied maturity. Indeed,  Figure \ref{lambda-IM}  shows that implied maturities associated with different contract  maturities $T=5, 8$ and $10$ get closer to each other as $\lambda$ increases. 


Next, we consider  the effect of the total number of ESOs granted. Intuitively we expect the implied maturity to increase as the number of options $M$ increases, but the effect is far from linear.  In Figure \ref{M-IM},  we see that the implied maturity is increasing as $M$ increases.  In other words, under the assumption that the ESOs will be exercised gradually, a larger ESO grant has an indirect effect of delaying exercises, and thus leading to higher implied maturity. 
  The increasing trends hold  for different exercise intensities, but the rate of increase diminishes significantly for large $M$. Also, the higher the exercise intensity, the lower the implied maturity.

 \begin{figure}[h]
  \centering
    \includegraphics[width=2.9in]{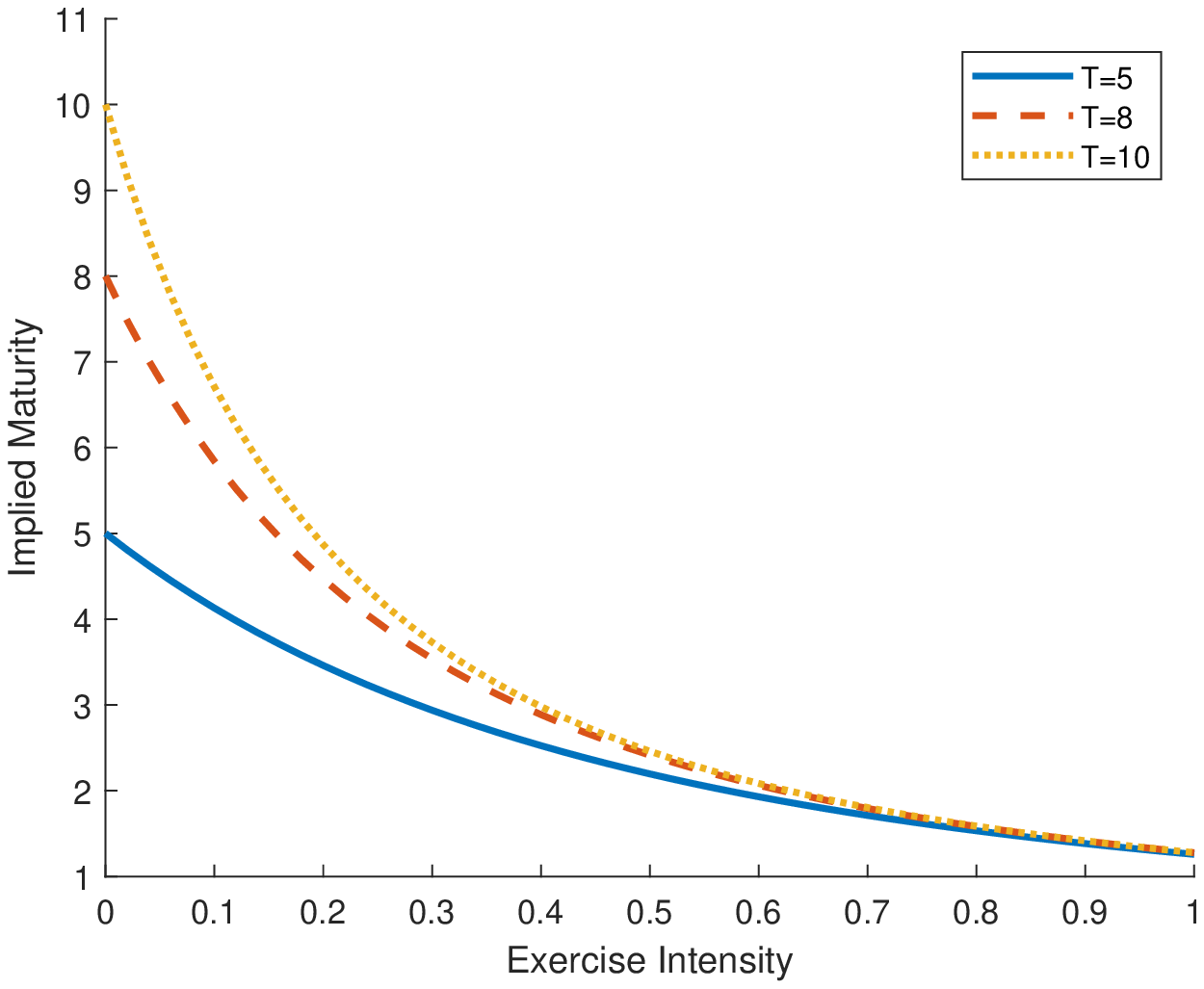}  \includegraphics[width=2.9in]{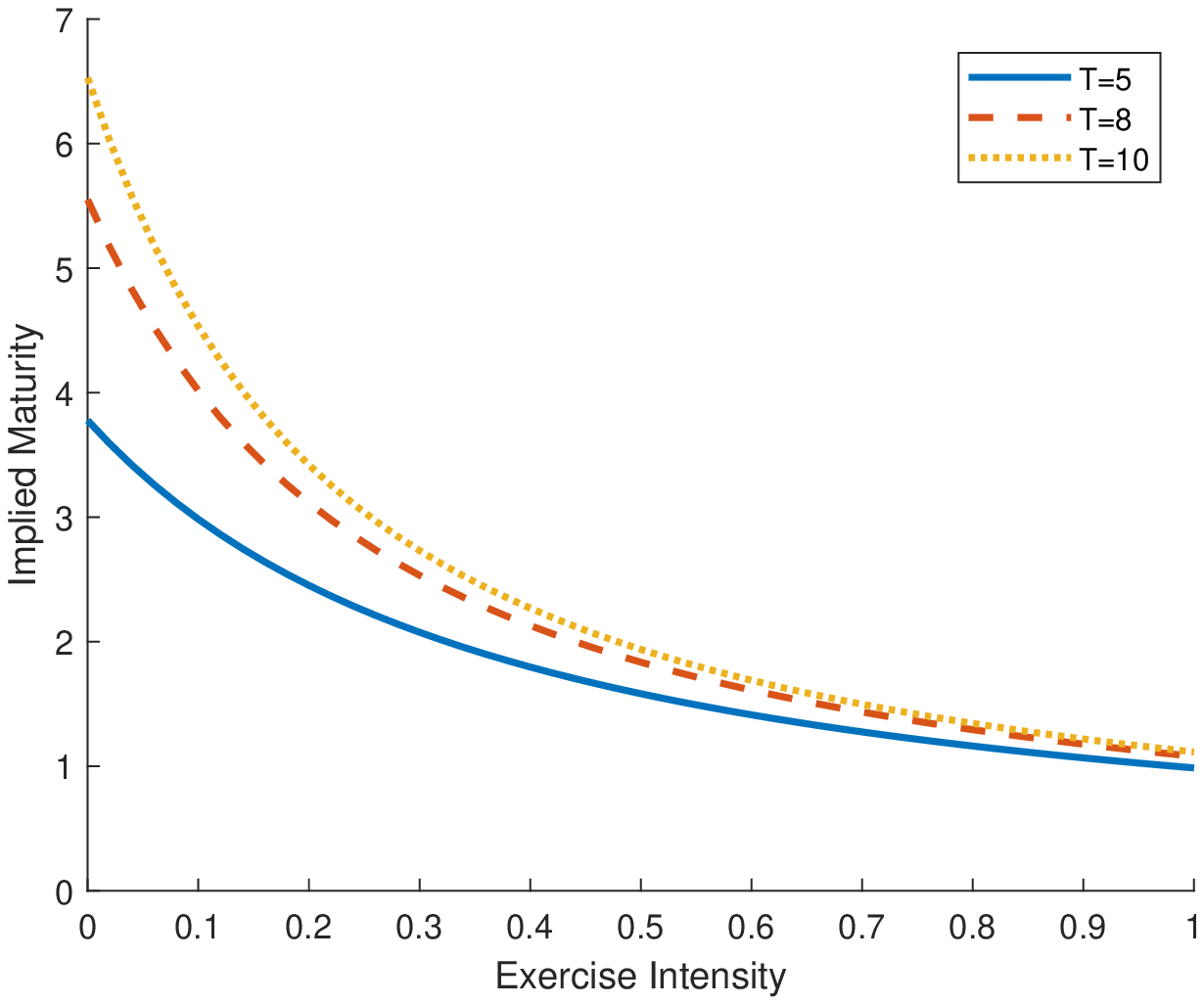}\\
  \caption{Implied maturity as a function of employee exercise intensity  $\lambda$ when the maturity $T= 5, 8$ or $10$, computed using FFT or maturity randomization. Left: FFT.  Right: Maturity randomization.  Parameters: $S_0=K=10$, $r=5\%$, $q=1.5\%$, $\sigma=20\%$, $p_{m,z}=1/m$, $M=5$, $t_v=0$ and $\beta=0.1$. In FFT: $N_x=2^{12}$, $x_{min}=-10$, $x_{max}=10$.}\label{lambda-IM}
\end{figure}

\begin{figure}[h]
  \centering
  \includegraphics[width=2.9in]{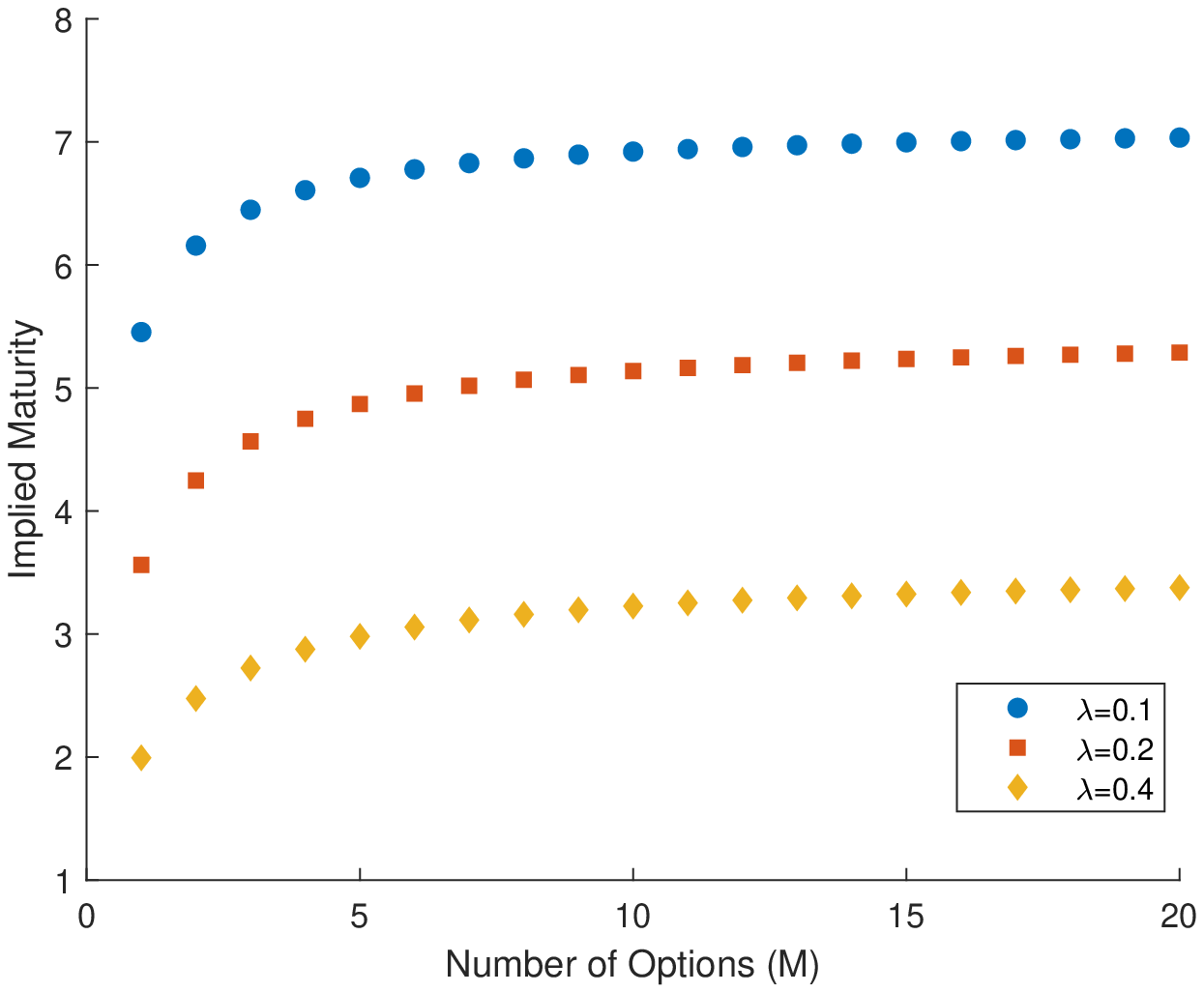}
  \includegraphics[width=2.9in]{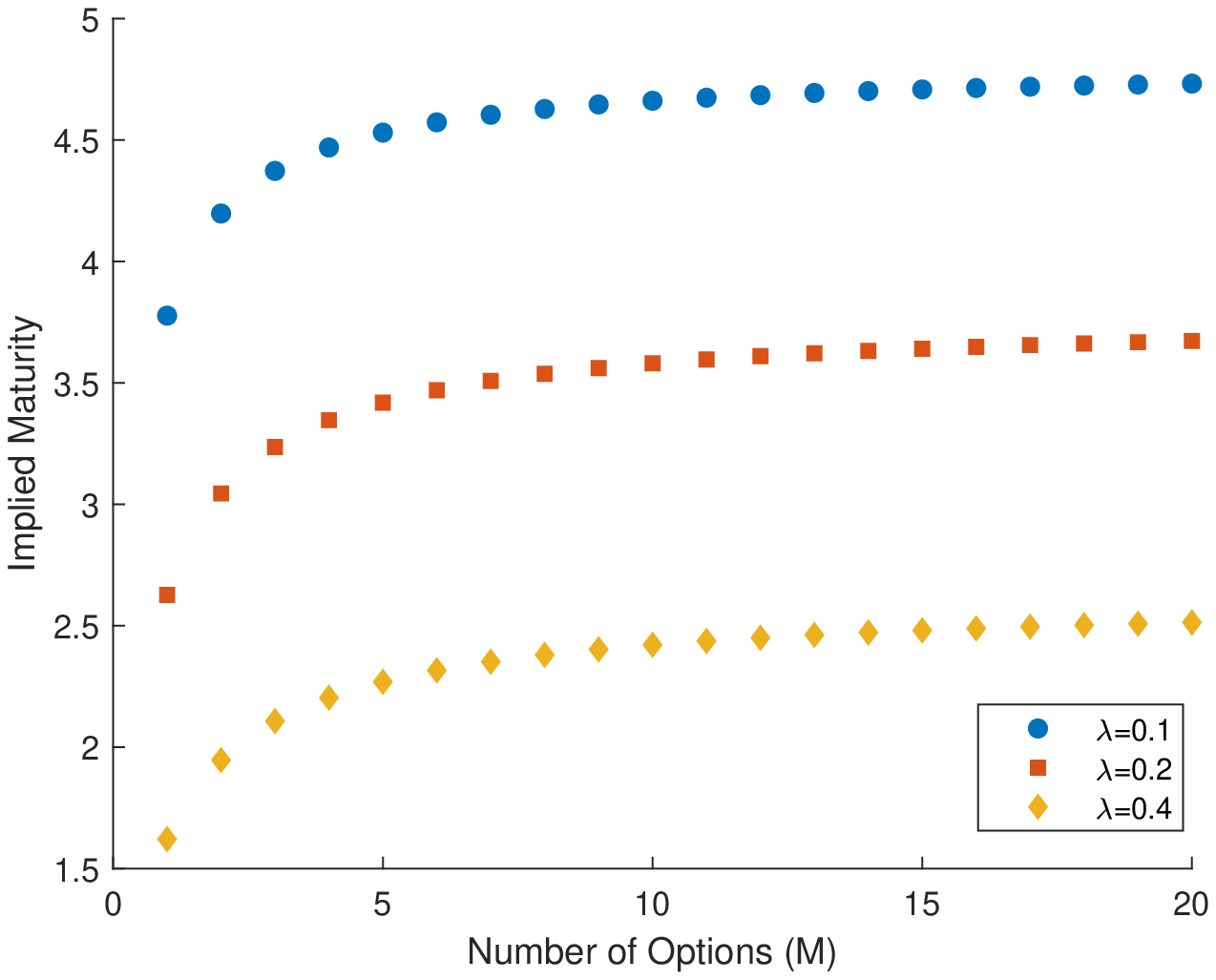}    \caption{Implied maturity as a function of number of options granted  $M$ with different exercise intensities  $\lambda$, computed using FFT or maturity randomization. Left: FFT. Right: Maturity randomization. Common parameters: $S_0=K=10$, $r=5\%$, $q=1.5\%$, $\sigma=20\%$, $p_{m,z}=1/m$, $T=10$, $t_v=0$ and $\beta=0.5$. In FFT: $N_x=2^{12}$, $x_{min}=-10$, $x_{max}=10$.}\label{M-IM}
\end{figure}
 
\clearpage 
 
\section{Conclusion}\label{sect-conclude}
We have studied a new valuation framework that allows the ESO holder to spread out the exercises of different quantities over time, rather than assuming that all options will be exercised at the same time. The holder's multiple random exercises are modeled by an exogenous jump process.  We illustrate the distribution of multiple-date exercises that are consistent with empirical evidence. Additional features included are   job termination risk during and after the vesting period. For cost computation, we apply a fast Fourier transform  method and finite difference method to solve the associated  systems of PDEs. Moreover, we provide an alternative method  based on maturity randomization for approximating the ESO cost. Its analytic formulae for vested and unvested ESO costs allow for instant computation. The proposed numerical method is not only applicable to expensing ESO grants as required by regulators, but also useful for understanding the combined effects of exercise intensity and job termination risk on the ESO cost. For future research, there are a number of directions related to our proposed framework. For many companies, risk estimation for large ESO pool is both practically important and challenging. Another related issue concerns the incentive effect and optimal design of ESOs so that the firm can better align the employee's interest over a longer period of time. 
 
\bibliographystyle{grappa}
\linespread{0}
\begin{small}
\bibliography{mybib_NEW}
\end{small}
\end{document}